\newenvironment{proof}{{\indent \indent \bf Proof.}}{\hfill $\blacksquare$\par}
\newtheorem{theorem}{Theorem}
\newtheorem{lemma}{Lemma}
\newtheorem{remark}{Remark}
\newtheorem{proposition}{Proposition}
\newtheorem{corollary}{Corollary}
\newtheorem{example}{Example}
\begin{document}

\begin{frontmatter}

\title{Controllability of networked multiagent systems based on linearized Turing's model\thanksref{footnoteinfo}} 

\thanks[footnoteinfo]{This work was supported by Natural Science Foundation of China under Grant T2293772, National Key R\&D Program of China under Grant 2018YFA0703800, Shanghai Municipal Science and Technology Major Project (No.2021SHZDZX0103), the Strategic Priority Research Program of Chinese Academy of Sciences under Grant No. XDA27000000, and National Science Foundation of Shandong Province (ZR2020ZD26). This paper was not presented at any IFAC meeting. Corresponding author Zhixin.~Liu.}

\author[amss]{Tianhao Li}\ead{litianhao@amss.ac.cn},    
\author[amss]{Ruichang Zhang}\ead{zrc20@amss.ac.cn}, 
\author[amss]{Zhixin Liu}\ead{lzx@amss.ac.cn},    
\author[fudan]{Zhuo Zou}\ead{zhuo@fudan.edu.cn},
\author[kth]{Xiaoming Hu}\ead{hu@kth.se}  

\address[amss]{Key Laboratory of Systems and Control, Academy of Mathematics and Systems Science,
        Chinese Academy of Sciences, and School of Mathematical Sciences, University of Chinese Academy of Sciences, Beijing 100190, P.~R.~China}  
\address[fudan]{Fudan University, Shanghai 200433, P.~R.~China}
\address[kth]{Optimization and Systems Theory, KTH Royal Institute of Technology, Stockholm 10044, Sweden}             

\begin{keyword}                           
   Turing's model; controllability; grid graph; trigonometric diophantine equation; networked system.
\end{keyword}                             

\begin{abstract}                          
Turing's model has been widely used to explain how simple, uniform structures can give rise to complex, patterned structures during the development of organisms. However, it is very hard to establish rigorous theoretical results for the dynamic evolution behavior of Turing's model since it is described by nonlinear partial differential equations. We focus on controllability of Turing's model by linearization and spatial discretization. This linearized model is a networked system whose agents are second order linear systems and these agents interact with each other by Laplacian dynamics on a graph. A control signal can be added to agents of choice. Under mild conditions on the parameters of the linearized Turing's model, we prove the equivalence between controllability of the linearized Turing's model and controllability of a Laplace dynamic system with agents of first order dynamics. When the graph is a grid graph or a cylinder grid graph, we then give precisely the minimal number of control nodes and a corresponding control node set such that the Laplace dynamic systems on these graphs with agents of first order dynamics are controllable.
\end{abstract}

\end{frontmatter}

\section{Introduction}

Alan Turing proposed a model in his seminal paper \cite{Turing1952} to explain how the formation of various inhomogeneous creature patterns, such as pigmentation in animals, branching in trees and skeletal structures, arises from homogeneous embryo \cite{Maini2012,Gierer1972}. Experimental research indicates that Turing's model plays an important role in biological pattern formation, such as the formation of murine hair follicle spacing \cite{Sick2006} and digit patterning \cite{Sheth2012,Raspopovic2014}. It is a nonlinear partial differential equation with a diffusion term and a nonlinear reaction term, which has given rise to a lot of inhomogeneous patterns in simulations \cite{Maini2012}. It is thus interesting to understand the self-organizing mechanism for creating complex patterns behind Turing's model. However, this self-organizing mechanism is so complicated that it would take a long time effort to fully understand it. Thus, in this work we study as a first step whether it is possible to create patterns by using external interventions instead. This idea coincides with the controllability problem in control theory when the external intervention is viewed as control. It is hard to study controllability of  the original Turing's model due to its nonlinearity and dimensionality. Therefore, in this paper, we study controllability of a networked linear system derived from Turing's model by linearization and spatial discretization. 

Recently, some researchers focus on the controllability of multiagent systems whose agents have high order dynamics \cite{Hao2019b,Zhang2017,Trumpf2019,Wang2016}. In \cite{Wang2016}, a necessary and sufficient condition for a general networked multi-input/multi-output (MIMO) system to be controllable is established in terms of two algebraic matrix equations. In \cite{Hao2018} and \cite{Hao2019b}, based on the eigenvalues and eigenvectors of diagonalizable topology matrices, necessary and sufficient conditions for a networked MIMO system to be controllable are given. Despite the fact that necessary and sufficient conditions for controllability of very general multiagent systems are given in the literature, verification of these conditions remains difficult.

Most research on controllability of multiagent systems of first order dynamics focuses on the Laplace dynamic systems, namely control systems whose system matrix is the Laplacian matrix of a graph. The network controllability problem of a Laplace dynamic system was introduced in \cite{Tanner2004}, where the eigenstructure of system matrix is used to characterize network controllability. Some progresses have been made for the network controllability in a general graph framework \cite{Sun2017,Aguilar2015,Yazıcıoğlu2016,Commault2013,Ji2017,Godsil2012,Rahmani2009}. For example, Godsil in \cite{Godsil2012} presented necessary and sufficient conditions for network controllability of unweighted adjacency dynamic system with a single node by concepts in graph theory. But, this work did not consider the case of multiple control nodes. In \cite{Rahmani2009}, a necessary condition for networked systems with single control node to be controllable is given by using graph symmetry, and then this result is extended to the multiple control node case by introducing the approach of network equitable partitions. However, sufficient conditions to  guarantee that a networked system is controllable in this work are not given. Research on the controllablity of networked system in general graph framework indicates that, without any given structure on the graph, it is hard to give a practically useful criterion for network controllability. Thus, some researchers are interested in controllability of networked systems on graphs of special structures, such as composite network constructed by Cartesian Products \cite{She2021,Chapman2014,Hao2019b}, Kronecker product \cite{Hao2019a}, join and union operations \cite{Mousavi2021}. In \cite{Chapman2014,Hao2019a}, the authors give some necessary and sufficient conditions for network controllability with multiple control nodes for Cartesian product network and Kronecker product network. In \cite{Mousavi2021}, Mousavi, Haeri and Mesbahi present a necessary and sufficient condition for the controllability of cographs, and provide an efficient method for selecting a control node set of minimal number for cographs to be controllable. These results are useful and practical for controllability of cographs. Nevertheless, a lot of graphs in reality are not cographs, such as almost all path graphs, cycle graphs and grid graphs. 

There is also research on controllability of more concrete networks, such as path graph \cite{She2020,Parlangeli2012}, cycle \cite{She2020,Parlangeli2012}, multichain \cite{Cao2013}, tree \cite{Ji2012}, grid graph \cite{Notarstefano2013} and circulant graph \cite{Nabi-Abdolyousefi2013}. In \cite{Parlangeli2012}, a necessary and sufficient condition based on simple relations from number theory for Laplace dynamic system on path graph to be controllable is given, by which network controllability of the Laplace dynamic system on the path graph is completely solved. Similar condition is also given for cycle graph in \cite{Parlangeli2012}. In \cite{Notarstefano2013}, Notarstefano and Parlangeli give a necessary and sufficient condition for controllability of the Laplace dynamic system on simple grid graphs, namely grid graphs whose Laplacian matrices have only simple eigenvalues. Then, the eigenstructure of Laplacian matrices of general grid graphs is analyzed. Yet, they did not give efficient method for the case when the grid graph is not simple. 

There is also research that does not aim at determining whether a networked system is controllable for given control node set, rather at finding a control node set with the minimal number of control nodes which assure controllability of the networked system \cite{Yuan2013,Olshevsky2014,Mousavi2021,Nabi-Abdolyousefi2013}. This problem is called ``Minimal controllability problem" and is proved to be NP-hard in \cite{Olshevsky2014}. In \cite{Yuan2013}, it is proposed that the minimal number of control nodes to ensure controllability of a networked system is equal to the maximum geometric multiplicity of Laplacian matrix for general network topology and link weights, and a method for selecting minimal set of control node to achieve full controllability of networks is obtained, provided some limitation on the network topology or link weights is given. In \cite{Nabi-Abdolyousefi2013}, the minimal number of control nodes for circulant networks is proved to be the maximum multiplicity of the Laplacian matrix. To the best of our knowledge, none of the results on ``Minimal controllability problem" in previous research works when the graph is a grid graph or cylinder grid graph.

In this paper, by using analytical methods such as PBH Criterion, Kalman Criterion and the solutions of trigonometric diophantine equations \cite{Conway1976,Wlodarski1969}, we study controllability of a linearized Turing's model, and give the minimal number of control nodes and a corresponding control node set for the linearized Turing's model to be controllable when the graph is chosen as a grid graph or cylinder grid graph.

The contribution of this paper is as follows. For a linearized Turing's model, which is a networked system with agents of second order dynamics, under mild conditions, we prove that its controllability is equivalent to network controllability of a Laplace dynamic system with agents of first order dynamics.  As a consequence, the controllability problem of the linearized Turing's model is significantly simplified. Taking advantage of the special structure of the linearized Turing's model, our conditions are concise and easy to verify compared to the previous work on controllability of general networked MIMO system \cite{Wang2016,Hao2018}; \cite{Hao2019b}. Furthermore, for Laplace dynamic systems on a grid graph, by carefully analyzing the eigenvalues and eigenvectors of Laplacian matrices of grid graphs, and by using the method of trigonometric diophantine equations, we give the precisely minimal number of control nodes and a minimal control nodes set so that the system is controllable. This number only depends on the row number and column number of the grid graph and can be easily computed. In fact, the greatest common divisor of the row number and column number plays an important role in this minimal number. Similar results are given when the base graph is a cylinder grid graph. These results are more general and complete compared to the previous work on controllability of grid graphs \cite{Notarstefano2013} as they did not give any efficient method for the case when the grid graph has multiple eigenvalues. Although we consider only the case when the system matrices of the control systems are the Laplacian matrices of two types of graphs mentioned above, the methods proposed in this paper can be easily applied to obtain similar results when the system matrices of the control systems are adjacency matrices of grid graphs and cylinder grid graphs.

The remainder of this paper is organized as follows. In Section \ref{section2}, we first introduce relevant background material pertaining to graphs, Cartesian products, and Kronecker products, and then introduce the linearized Turing's model and its controllability problem. In Section \ref{section3}, we analyze controllability of the linearized Turing's model. In Section \ref{section4}, controllability of the Laplace dynamic systems on grid graphs and cylinder grid graphs is analyzed. 

\section{Preliminaries and problem formulation}\label{section2}

\subsection{Preliminaries}\label{subsection2.1}

In this paper, $\mathbf{R}$, $\mathbf{C}$ and $\mathbf{N}$ denote the set of real, complex and natural numbers, respectively. $I_{n}=[e^{(n)}_{1},\cdots, e^{(n)}_{n}]$ denotes the $n\times n$ unit matrix where  $e^{(n)}_{j}$ denotes the $j$th column of $I_{n}$. For two matrices $P\in \mathbf{R}^{m\times m}$ and $Q\in \mathbf{R}^{n\times n}$, $P\oplus Q = P\otimes I_{n}+I_{m}\otimes Q$ is called the Kronecker sum of $P$ and $Q$, where $\otimes$ denotes the Kronecker product. If $P$ and $Q$ have (either left or right) eigenvalue-eigenvector pairs $(\lambda_{i},v_{i})$ for $1\leq i \leq p_{1} (p_{1}\leq m)$ and $(\mu_{j},w_{j})$ for $1\leq j\leq p_{2} (p_{2}\leq n)$, respectively, then $P\oplus Q$ has eigenvalue-eigenvector pairs $(\lambda_{i}+\mu_{j},v_{i}\otimes w_{j})$ for $1\leq i \leq p_{1}$ and $1\leq j\leq p_{2}$.

A graph is denoted as $\mathcal{G}=(\mathcal{V},\mathcal{E})$, where $\mathcal{V}=\{\zeta_{i}|1\leq i\leq N\}$ represents the vertex set, $\mathcal{E}\subset \mathcal{V}\times \mathcal{V}$ is the edge set, and $\times$ denotes the Cartesian product of two sets. All the graphs mentioned in this paper are undirected. The Laplacian matrix of a graph is denoted by $L$.

The Cartesian product of graphs can help us construct properties of complex graphs from those of simple graphs. For two graphs $\mathcal{G}_{1}=(\mathcal{V}_{1},\mathcal{E}_{1})$ and $\mathcal{G}_{2}=(\mathcal{V}_{2},\mathcal{E}_{2})$, $\mathcal{G}_{1}\Box \mathcal{G}_{2}=(\mathcal{V},\mathcal{E})$ is called their Cartesian product, where $\mathcal{V}=\mathcal{V}_{1}\times \mathcal{V}_{2}$ is the Cartesian product of sets $\mathcal{V}_{1}$ and $\mathcal{V}_{2}$, and $\mathcal{E}=\{((\zeta_{i},\eta_{i}),(\zeta_{j},\eta_{j}))\in (\mathcal{V}_{1}\times \mathcal{V}_{2})\times(\mathcal{V}_{1}\times \mathcal{V}_{2})|\zeta_{i}=\zeta_{j}, (\eta_{i},\eta_{j})\in \mathcal{E}_{2};\ or \ (\zeta_{i},\zeta_{j})\in \mathcal{E}_{1}, \eta_{i}=\eta_{j}\}$. The Laplacian matrix of $\mathcal{G}_{1}\Box \mathcal{G}_{2}$ are $L_{1}\oplus L_{2}$, where $L_{k}(k=1,2)$ are Laplacian matrix of $\mathcal{G}_{k}$.


In the following, we introduce several commonly used graphs. For a graph $\mathcal{G}=(\mathcal{V},\mathcal{E})$, if $\mathcal{V}=\{\zeta_{i}|1\leq i\leq n\}$ and $\mathcal{E}=\{(\zeta_{i},\zeta_{i+1})|1\leq i\leq n-1\}$, then it is called a path graph and denoted as $\mathcal{P}_{n}$. The Cartesian product of two path graphs $\mathcal{P}_{m}$ and $\mathcal{P}_{n}$ is called a grid graph (see Fig.~\ref{fig1_1}), and denoted by $\mathcal{L}_{m,n}=\mathcal{P}_{m}\Box \mathcal{P}_{n}$. A graph $\mathcal{G}=(\mathcal{V},\mathcal{E})$ is called a cycle graph, if $\mathcal{V}=\{\zeta_{i}|1\leq i\leq n\}$ and $\mathcal{E}=\{(\zeta_{i},\zeta_{i+1})|1\leq i\leq n-1\}\cup \{(\zeta_{n},\zeta_{1})\}$. Denote a cycle graph with $m$ nodes as $\mathcal{D}_{m}$. The Cartesian product of a cycle graph $\mathcal{D}_{m}$ and a path graph $\mathcal{P}_{n} $ is called a cylinder grid graph (see Fig.~\ref{fig1_2}), and denoted by $\mathcal{C}_{m,n}=\mathcal{D}_{m} \Box \mathcal{P}_{n}$.

\begin{figure}
\begin{center}
\subfigure[A $9\times 9$ grid graph]{
    \label{fig1_1}
    \includegraphics[width=0.45\linewidth]{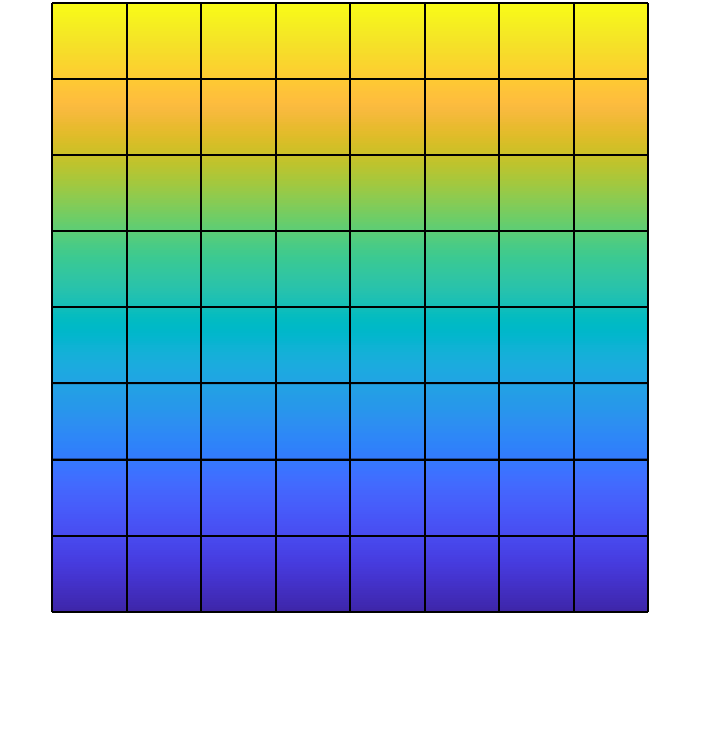}}
\subfigure[An $8\times 9$ cylinder grid graph]{
    \label{fig1_2}
    \includegraphics[width=0.45\linewidth]{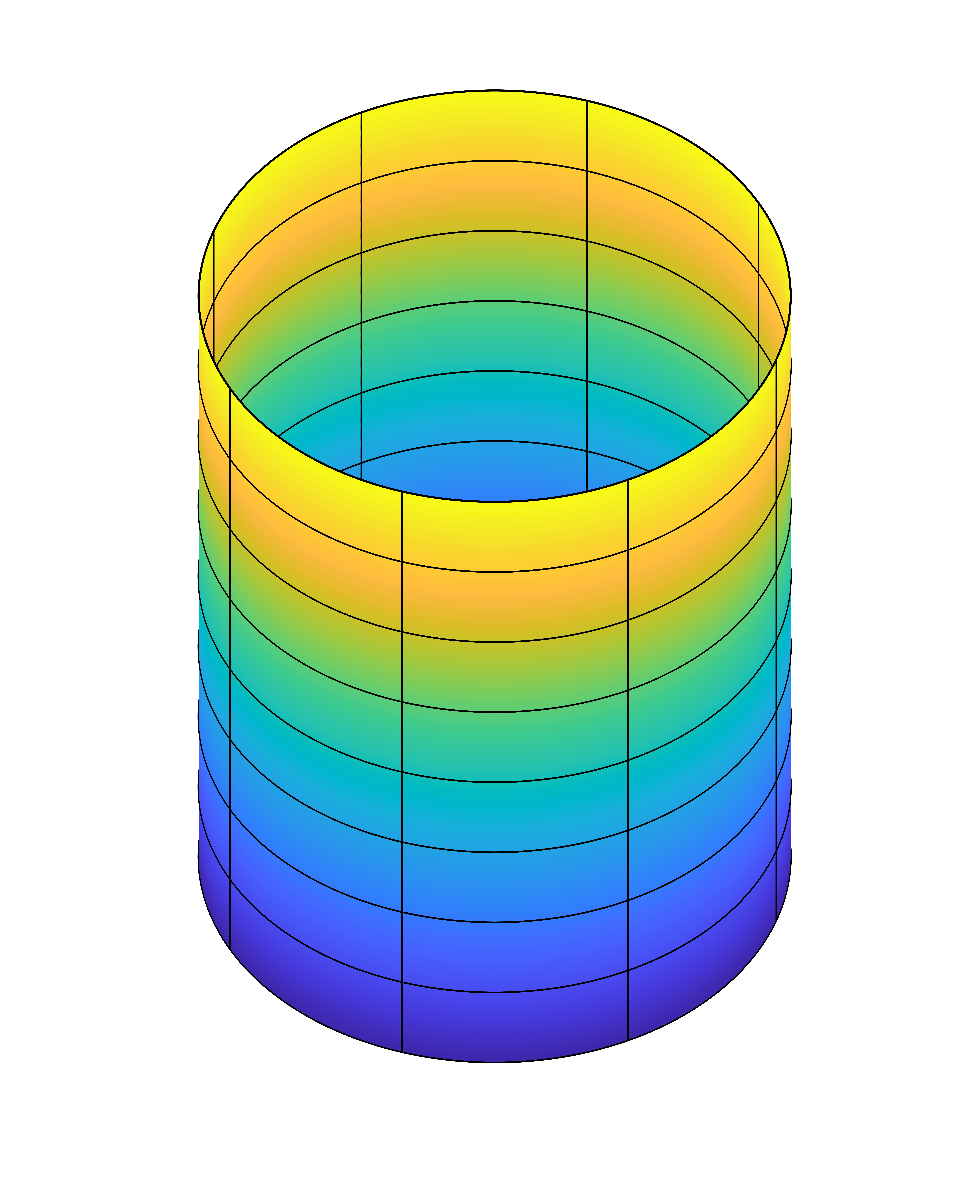}}
\caption{The figures of a grid graph and a cylinder grid graph.}
\label{fig1}
\end{center}                                 
\end{figure}



\subsection{Problem formulation}
The Turing's model is proposed to explain the process of biological pattern formation(cf., \cite{Turing1952,Maini2012}). The dynamics of the system is described by
\begin{equation}
  \nonumber
  \left\{
    \begin{array}{c}
    \frac{\partial {z}_{1}(t,x,y)}{\partial t}=\nu_{1} \Delta z_{1}+f(z_{1},z_{2}),\\
    \frac{\partial {z}_{2}(t,x,y)}{\partial t}=\nu_{2} \Delta z_{2}+g(z_{1},z_{2}),
    \end{array}
    \right.
\end{equation}
where $t\in [0,+\infty)$, $(x,y)\in \mathbf{D}$ with $\mathbf{D}$ being a bounded region contained in $\mathbf{R}^{2}$, $z_{i}\in \mathbf{R}$,  $f,g: \mathbf{R}^{2}\rightarrow \mathbf{R}$ are two differential functions, $\Delta = \frac{\partial ^2}{\partial x^2} + \frac{\partial ^2}{\partial y^2}$ is the Laplace operator in $\mathbf{R}^{2}$, and $\nu_{1},\nu_{2}$ are two positive constants. Appropriate boundary and initial conditions are applied to close the system. In the Turing's model, ${z}_{i}(t,x,y)$ represents concentrations of chemical material $\mathbf{M}_{i}$ at time $t$ and location $(x,y)$. $\nu_{1} \Delta z_{1}$ and $\nu_{2} \Delta z_{2}$ represent  diffusion rates of chemical materials $\mathbf{M}_{1}$ and $\mathbf{M}_{2}$, respectively, and  $\nu_{1}$ and $\nu_{2}$ are diffusion coefficients. $f(z_{1},z_{2})$ and $g(z_{1},z_{2})$ are the production rates or consumption rates of $\mathbf{M}_{1}$ and $\mathbf{M}_{2}$ by chemical reaction.

This model is described by nonlinear partial differential equations, whose dynamics are hard to analyze. To simplify this model, we first discretize the region $\mathbf{D}$ into a graph of $N$ vertexes. Then, by linearization and spatial discretization (sampling on vertexes of the graph in $\mathbf{D}$), we get the following linear second-order multi-agent system,
\begin{equation}
  \nonumber
  \left\{
    \begin{array}{c}
    \dot{z}_{1}(t)=-\nu_{1} Lz_{1}+az_{1}+bz_{2},\\
    \dot{z}_{2}(t)=-\nu_{2} Lz_{2}+cz_{1}+dz_{2},
    \end{array}
    \right.
\end{equation}
where $L$ is the Laplacian matrix of the graph, $z_{i}(t)=[z_{i}^{(1)},\cdots,z_{i}^{(N)}]\in \mathbf{R}^{N}$ denotes the concentration of $M_{i}$ at sampling points for $i=1,2$. Each vertex $j\ (1\leq j\leq N)$, also called agent,  has two states $z^{(j)}_{1}$ and $z^{(j)}_{2}$. Few patterns will emerge if the above linear system evolves in a self-organization manner. We are interested in whether it is possible to generate desired patterns by external intervention. Thus, we choose $\kappa+\tau$ states $z^{(i_{1})}_{1},\cdots,z^{(i_{\kappa})}_{1},z^{(j_{1})}_{2},\cdots,z^{(j_{\tau})}_{2}$ to add control. By this way, we can obtain the following controlled linearized Turing's model,  
\begin{equation}
  \label{eq1}
  \left\{
    \begin{array}{c}
    \dot{z}_{1}(t)=-\nu_{1} Lz_{1}+az_{1}+bz_{2}+Bu_{1},\\
    \dot{z}_{2}(t)=-\nu_{2} Lz_{2}+cz_{1}+dz_{2}+Cu_{2},
    \end{array}
    \right.
\end{equation}
where $L$ is the Laplacian matrix of a graph, $\nu_{1},\nu_{2}>0$, $a,b,c,d\in \mathbf{R}$, $z_{i}(t)\in \mathbf{R}^{N}$ for $i=1,2$ and $t\in [0,+\infty)$, $B=[e^{(N)}_{i_{1}},\cdots,e^{(N)}_{i_{\kappa}}]$ and $C=[e^{(N)}_{j_{1}},\cdots,e^{(N)}_{j_{\tau}}]$ ( $1\leq i_{k},j_{l}\leq N$ for $1\leq k\leq \kappa$ and $1\leq l\leq \tau$) are control matrices, $e^{(N)}_{j}$ denotes the $j$th column of $N\times N$ unit matrix, $u_{1}\in \mathbf{R}^{\kappa}$ and $u_{2}\in \mathbf{R}^{\tau}$ are the control inputs. 

 If system (\ref{eq1}) is controllable, then the desired behaviors can be achieved by designing control inputs.  In this paper, we will investigate  how to choose control matrices $B$ and $C$ to achieve controllability of the system. 

In the controlled linearized Turing's model (\ref{eq1}), by choosing some states instead of vertexes to add control,  we can reduce the amount of controlled states to achieve controllability of the system. We illustrate this point by an example. 
\begin{example}
Let the parameters in system (\ref{eq1}) be taken as $\nu_{1} = \nu_{2} =1$, $a=1$, $b=2$, $c = 3$, $d=4$. The number of vertexes is  $N=3$, and the graph is taken as a path graph with the corresponding Laplacian matrix   
\begin{equation}
    \nonumber
    L = \begin{bmatrix}
    1 & -1 & 0\\
    -1 & 2 & -1\\
    0 & -1 & 1
    \end{bmatrix}.
\end{equation}
If we choose vertexes to add control, then we can achieve controllability of the system by  controlling the first vertex. For this case, the control matrices are $B=C=e^{(3)}_{1}$, and two states $z^{(1)}_{1}$ and $z^{(1)}_{2}$ are controlled. On the other hand, if we choose states to add control, then we only need to choose the first state of the first vertex, i.e. $z^{(1)}_{1}$, to achieve controllability of the system. The control matrices are $B=e^{(3)}_{1}$ and $C=0$.
\end{example}

\section{Controllability of the linearized Turing's model}\label{section3}

Denote
\begin{equation}\nonumber
    A = \begin{bmatrix} -\nu_{1} L+aI & bI \\ cI & -\nu_{2} L+dI \end{bmatrix},
\end{equation}
\begin{equation}\nonumber
    \tilde{B} = \begin{bmatrix} B & 0_{N\times \tau} \\ 0_{N\times \kappa} & C \end{bmatrix}.
\end{equation}
Then (\ref{eq1}) can be written as
\begin{equation}\nonumber
    \dot{z}=Az+\tilde{B}u,
\end{equation}
where $z = [z_{1}^{T},z_{2}^{T}]^{T}$ and $u=[u_{1}^{T},u_{2}^{T}]^{T}$.




We first give a necessary condition for the controllability of $(A,\tilde{B})$.
\begin{lemma}\label{lemma1}
If $(A,\tilde{B})$ is controllable, then $(L, \begin{bmatrix} B & C \end{bmatrix})$ is controllable.
\end{lemma}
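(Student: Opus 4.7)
The plan is to prove the contrapositive via the Popov--Belevitch--Hautus (PBH) eigenvector test. Suppose $(L,[B\ C])$ is not controllable. Then there exist $\lambda\in\mathbf{C}$ and a nonzero vector $v\in\mathbf{C}^{N}$ with $v^{T}L=\lambda v^{T}$, $v^{T}B=0$, and $v^{T}C=0$. My goal is to lift $v$ to a nonzero left eigenvector of $A$ that is annihilated by $\tilde B$, which would force $(A,\tilde B)$ to be uncontrollable.

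I would look for this lifted eigenvector in the separable form $w=[\alpha v^{T},\ \beta v^{T}]$ with $(\alpha,\beta)\neq(0,0)$. A direct block multiplication, using $v^{T}L=\lambda v^{T}$, yields
\begin{equation*}
wA=\bigl[((a-\nu_{1}\lambda)\alpha+c\beta)v^{T},\ (b\alpha+(d-\nu_{2}\lambda)\beta)v^{T}\bigr],
\end{equation*}
so $wA=\mu w$ is equivalent to $(\alpha,\beta)^{T}$ being an eigenvector of the reduced $2\times2$ matrix
\begin{equation*}
M_{\lambda}=\begin{bmatrix} a-\nu_{1}\lambda & c\\ b & d-\nu_{2}\lambda\end{bmatrix}.
\end{equation*}
Every $2\times 2$ complex matrix has at least one eigenpair with a nonzero eigenvector, so such $(\mu,\alpha,\beta)$ exists. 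Combined with $v\neq0$, this gives a nonzero left eigenvector $w$ of $A$ with eigenvalue $\mu$.

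Finally, the block-diagonal structure of $\tilde B$ gives
\begin{equation*}
w\tilde B=[\alpha v^{T}B,\ \beta v^{T}C]=0,
\end{equation*}
so $(A,\tilde B)$ fails the PBH test, contradicting the hypothesis. There is no real obstacle beyond choosing the separable ansatz for $w$; once that is in place the proof reduces to a $2\times 2$ eigenvalue problem that always has a nontrivial solution, so no assumptions on the parameters $a,b,c,d,\nu_{1},\nu_{2}$ are needed for this direction of the equivalence (these parameters will re-enter only in the converse, where one needs to rule out accidental cancellation in $M_{\lambda}$).
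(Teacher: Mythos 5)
Your proof is correct, and it rests on the same key observation as the paper's: the two-dimensional row space spanned by $[v^{T},\ 0]$ and $[0,\ v^{T}]$ is invariant under right multiplication by $A$ and is annihilated by $\tilde B$. The two arguments diverge only in how they convert this into uncontrollability of $(A,\tilde B)$. The paper stays with the Kalman criterion: it simply notes that $[v^{T},\ 0_{1\times N}]A^{k}\tilde B=0$ for all $k$ (each $[v^{T},0]A^{k}$ remains of the form $[\alpha_{k}v^{T},\ \beta_{k}v^{T}]$), so a nonzero row vector annihilates the whole controllability matrix; no eigenvector of $A$ is ever produced. You instead push through to an actual left eigenvector of $A$ by reducing to the $2\times 2$ eigenproblem for $M_{\lambda}$ and invoking PBH on $(A,\tilde B)$. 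Your route is slightly more work but yields a concrete uncontrollable eigendirection of $A$, and it foreshadows exactly the eigenstructure analysis (the pairs $s_{i}(\lambda)$, $k_{i}(\lambda)$ solving the quadratics (\ref{eq6})--(\ref{eq7})) that the paper deploys later in Proposition~\ref{proposition2}; the paper's version is marginally more economical for this lemma alone. Your closing remark that no parameter assumptions are needed here, and that they only matter for the converse, is also accurate.
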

\begin{proof}
If $(L, \begin{bmatrix} B & C \end{bmatrix})$ is uncontrollable, then by PBH criterion (\cite{Bacciotti2019}), there exist $\lambda \in \mathbf{C}$ and a nonzero vector $v \in \mathbf{C}^{N}$ such that $v^{T}L=\lambda v^{T}$, and $v^{T}B=v^{T}C=0$. By calculation, we can derive that  $[v^{T},0_{1\times N}]A^{k}\tilde{B}=0$ for $k=0,1,...,N-1$, which indicates that $(A,\tilde{B})$ is uncontrollable. By reduction to absurdity, the lemma is proved.
\end{proof}

In the following, we consider the sufficient conditions for the controllability of $(A,\tilde{B})$. Since both the analysis and results for the  controllability of $(A,\tilde{B})$ are different for the cases of $C=B$ and $C\neq B$, we will study these two cases separately. We first  provide a necessary and sufficient condition for the controllability of $(A,\tilde{B})$ when $C=B$.

\begin{proposition}\label{proposition1}
If $\nu_{1} \neq 0, \nu_{2} \neq 0$ and $C=B$, then $(A, \tilde{B})$ is controllable if and only if $(L,B)$ is controllable.
\end{proposition}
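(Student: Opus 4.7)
The plan is to characterize controllability of both pairs via the PBH criterion and to translate the block structure of $A$ and $\tilde{B}$ into conditions on the eigendata of $L$. Necessity follows directly from Lemma~\ref{lemma1}: since $C=B$, $\begin{bmatrix}B & C\end{bmatrix}=\begin{bmatrix}B & B\end{bmatrix}$ has the same column span as $B$, so controllability of $(L,\begin{bmatrix}B & C\end{bmatrix})$ coincides with controllability of $(L,B)$.

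For sufficiency I would argue by contradiction: assume $(L,B)$ is controllable but $(A,\tilde{B})$ is not. PBH then supplies $\mu\in\mathbf{C}$ and a nonzero $w=(w_1^T,w_2^T)^T$ with $w^TA=\mu w^T$ and $w^T\tilde{B}=0$. Since the Laplacian of an undirected graph is symmetric, $L$ admits an orthonormal eigenbasis $\{v_j\}$ with real eigenvalues $\{\lambda_j\}$. Writing $w_1=\sum_j\alpha_jv_j$, $w_2=\sum_j\beta_jv_j$ and projecting $w^TA=\mu w^T$ onto each $v_j$ decouples the problem into the local $2\times 2$ conditions
\begin{equation*}
(a-\nu_1\lambda_j-\mu)\alpha_j+c\beta_j=0,\qquad b\alpha_j+(d-\nu_2\lambda_j-\mu)\beta_j=0,
\end{equation*}
so that for each $j$ either $(\alpha_j,\beta_j)=0$ or $(\alpha_j,\beta_j)$ is a left eigenvector of $M(\lambda_j)=\begin{bmatrix}a-\nu_1\lambda_j & b\\ c & d-\nu_2\lambda_j\end{bmatrix}$ associated with $\mu$. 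Since $\det(M(\lambda)-\mu I)$ is quadratic in $\lambda$ with leading coefficient $\nu_1\nu_2\neq 0$, at most two eigenvalues of $L$ admit $\mu$ as a local eigenvalue, and I would case-split on whether this set has one or two elements.

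If only one such $\lambda^*$ occurs with a one-dimensional local left-kernel spanned by $(\alpha^*,\beta^*)$, then $w_1=\alpha^*v^*$ and $w_2=\beta^*v^*$ for a common $v^*$ in the $L$-eigenspace $V_{\lambda^*}$; the constraint $w^T\tilde{B}=0$ then forces $(v^*)^TB=0$ with $v^*\neq 0$, contradicting PBH for $(L,B)$. The degenerate subcase of a two-dimensional local kernel forces $b=c=0$, under which (\ref{eq1}) decouples into two independent copies of a first-order system driven by $(L,B)$ and the conclusion is immediate.

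The main obstacle is the case of two distinct $\lambda^{(1)},\lambda^{(2)}$: writing $w_1=\alpha^{(1)}u^{(1)}+\alpha^{(2)}u^{(2)}$ and $w_2=\beta^{(1)}u^{(1)}+\beta^{(2)}u^{(2)}$ with $u^{(k)}\in V_{\lambda^{(k)}}$, the constraint $w^T\tilde{B}=0$ collapses to
\begin{equation*}
\begin{bmatrix}\alpha^{(1)} & \alpha^{(2)}\\ \beta^{(1)} & \beta^{(2)}\end{bmatrix}\begin{bmatrix}(u^{(1)})^TB\\ (u^{(2)})^TB\end{bmatrix}=0,
\end{equation*}
and showing this $2\times 2$ matrix is invertible is the algebraic heart of the argument. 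This is exactly where $\nu_1,\nu_2\neq 0$ enters: the local eigenvector relations give $(\alpha,\beta)\propto(c,\nu_1\lambda+\mu-a)$ when $c\neq 0$ and $(\alpha,\beta)\propto(\nu_2\lambda+\mu-d,b)$ when $c=0$, $b\neq 0$, so the resulting determinant is a nonzero multiple of $\lambda^{(1)}-\lambda^{(2)}$. Invertibility then yields $(u^{(k)})^TB=0$, and one more application of PBH for $(L,B)$ gives $u^{(k)}=0$ and $w=0$, the desired contradiction.
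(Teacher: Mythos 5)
Your proof is correct, but it takes a genuinely different route from the paper's. The paper proves sufficiency via the Kalman rank criterion: it shows by induction that $A^{k}B_{1}$ and $A^{k}B_{2}$ expand as combinations of $\bar{L}^{i}B_{1},\bar{L}^{i}B_{2}$ with nonvanishing leading coefficients $(-\nu_{1})^{k}$ and $(-\nu_{2})^{k}$, so the controllability matrix of $(A,\tilde{B})$ differs from $[B_{1},B_{2},\bar{L}B_{1},\dots]$ by an invertible factor and has rank $2N$. You instead run a PBH/spectral argument: decompose a putative uncontrollable left eigenvector of $A$ in the eigenbasis of $L$, observe that $\det(M(\lambda)-\mu I)$ is a quadratic in $\lambda$ with leading coefficient $\nu_{1}\nu_{2}\neq 0$ so at most two $L$-eigenvalues contribute, and then kill the two-eigenvalue case with the $2\times 2$ determinant $c\nu_{1}(\lambda^{(2)}-\lambda^{(1)})$ (or $b\nu_{2}(\lambda^{(1)}-\lambda^{(2)})$), which is exactly where $C=B$ is exploited to combine the two constraints $w_{1}^{T}B=w_{2}^{T}B=0$. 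Your approach is essentially the method the paper reserves for Proposition~\ref{proposition2} ($C\neq B$), and it is instructive that in the $C=B$ case it succeeds \emph{without} the measure-zero exclusion set $S$; it also makes the role of $\nu_{1},\nu_{2}\neq 0$ transparent. What the paper's argument buys in exchange is independence from any spectral decomposition of $L$ (it never uses symmetry or diagonalizability), whereas yours leans on the orthonormal eigenbasis of the undirected Laplacian. Two small points of hygiene: the case $b=c=0$ should be split off at the very beginning, since it can also occur in your two-eigenvalue branch (where both local left kernels could be spanned by the same coordinate vector and your $2\times 2$ matrix would be singular) — your decoupling argument handles it, but as written it is only invoked inside the single-eigenvalue subcase; and you should note explicitly that $u^{(k)}=0$ for both $k$ forces $w=0$, closing the contradiction.
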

\begin{proof}
By Lemma \ref{lemma1}, we have the necessity. Now, we consider the sufficiency. Denote
\begin{equation}\nonumber
\begin{aligned}
    & B_{1}=\begin{bmatrix} B \\ 0_{N\times \kappa} \end{bmatrix}, B_{2}=\begin{bmatrix} 0_{N\times \kappa} \\ B \end{bmatrix},\\
    & F = \begin{bmatrix} aI & bI \\ cI & dI \end{bmatrix}, \bar{L}=\begin{bmatrix} L & 0_{N\times \kappa} \\ 0_{N\times \kappa} & L  \end{bmatrix}.
\end{aligned}
\end{equation}
We have for $k\in \mathbf{N}$,
\begin{align}
    A \bar{L}^{k}B_{1}=&-\nu_{1} \bar{L}^{k+1}B_{1}+F\bar{L}^{k}B_{1} \nonumber \\
    &=-\nu_{1} \bar{L}^{k+1}B_{1}+a\bar{L}^{k}B_{1}+c\bar{L}^{k}B_{2},\label{eq2}
\end{align}
and
\begin{align}
    A \bar{L}^{k}B_{2}&=-\nu_{2} \bar{L}^{k+1}B_{2}+F\bar{L}^{k}B_{2} \nonumber\\
    &=-\nu_{2} \bar{L}^{k+1}B_{2}+b\bar{L}^{k}B_{1}+d\bar{L}^{k}B_{2}.\label{eq3}
\end{align}

We claim that for $k\in \mathbf{Z}^{+}$,
\begin{equation}
\label{eq4}
    A^{k}B_{1} = (-\nu_{1})^{k}\bar{L}^{k}B_{1}+ \sum_{i=1}^{k}(p_{ik}\bar{L}^{i-1}B_{1}+q_{ik}\bar{L}^{i-1}B_{2}),
\end{equation}
\begin{equation}
\label{eq5}
    A^{k}B_{2} = (-\nu_{2})^{k}\bar{L}^{k}B_{2}+ \sum_{i=1}^{k}(r_{ik}\bar{L}^{i-1}B_{1}+t_{ik}\bar{L}^{i-1}B_{2}),
\end{equation}
where $p_{ik}$, $q_{ik}$, $r_{ik}$ and $t_{ik}$ are some proper real numbers. We prove equations (\ref{eq4}) and (\ref{eq5}) by induction. It is clear that the equations hold for $k=1$. We assume that equations (\ref{eq4}) and (\ref{eq5}) hold for $k \leq m$. Then for $k=m+1$, by (\ref{eq2}) and (\ref{eq3}), we have
\begin{equation}
\begin{aligned}\nonumber
    &A^{m+1}B_{1} = A(A^{m}B_{1})\\
    &=(-\nu_{1})^{m}A\bar{L}^{m}B_{1}+\sum_{i=1}^{m}(p_{im}A\bar{L}^{i-1}B_{1}+q_{im}A\bar{L}^{i-1}B_{2})\\
    &= (-\nu_{1})^{m+1}\bar{L}^{m+1}B_{1}\\
    &+\sum_{i=1}^{m+1}(p_{i,m+1}\bar{L}^{i-1}B_{1}+q_{i,m+1}\bar{L}^{i-1}B_{2}),
\end{aligned}
\end{equation}
\begin{equation}
\begin{aligned}\nonumber
    &A^{m+1}B_{2} = A(A^{m}B_{2})\\
    &=(-\nu_{2})^{m}A\bar{L}^{m}B_{2}+\sum_{i=1}^{m}(r_{im}A\bar{L}^{i-1}B_{1}+t_{im}A\bar{L}^{i-1}B_{2})\\
    &= (-\nu_{2})^{m+1}\bar{L}^{m+1}B_{2}\\
    &+\sum_{i=1}^{m+1}(r_{i,m+1}\bar{L}^{i-1}B_{1}+t_{i,m+1}\bar{L}^{i-1}B_{2}),
\end{aligned}
\end{equation}
where 
\begin{equation}\nonumber
\begin{aligned}
    &p_{m+1,m+1}=(-\nu_{1})^{m}a-\nu_{1} p_{m,m},\\
    &p_{i,m+1} = ap_{i,m}+b q_{i,m}-\nu_{1} p_{i-1,m},\\
    &p_{1,m+1} = ap_{1,m}+b q_{1,m},\\
    &q_{m+1,m+1}=(-\nu_{1})^{m}c-\nu_{2} q_{m,m},\\
    &q_{i,m+1} = cp_{i,m}+d q_{i,m}-\nu_{2} q_{i-1,m},\\
    &q_{1,m+1} = cp_{1,m}+d q_{1,m},\\
    &r_{m+1,m+1}=(-\nu_{2})^{m}b-\nu_{1} r_{m,m},\\
    &r_{i,m+1} = ar_{i,m}+b t_{i,m}-\nu_{1} r_{i-1,m},\\
    &r_{1,m+1} = ar_{1,m}+b t_{1,m},\\
    &t_{m+1,m+1}=(-\nu_{2})^{m}d-\nu_{2} t_{m,m},\\
    &t_{i,m+1} = cr_{i,m}+d t_{i,m}-\nu_{2} t_{i-1,m},\\
    &t_{1,m+1} = cr_{1,m}+d t_{1,m},   
\end{aligned}
\end{equation}
for $2\leq i\leq m$. Thus, equations (\ref{eq4}) and (\ref{eq5}) hold.
By  (\ref{eq4}) and (\ref{eq5}), we have
\begin{equation}
\begin{aligned}\nonumber
    &[B_{1},B_{2},AB_{1},AB_{2},\cdots,A^{2N-1}B_{1},A^{2N-1}B_{2}]\\
    &=[B_{1},B_{2},\bar{L}B_{1},\bar{L}B_{2},\cdots,\bar{L}^{2N-1}B_{1},\bar{L}^{2N-1}B_{2}]Q,
\end{aligned}
\end{equation}
where $Q$ is a $4N\kappa \times 4N\kappa$ invertible matrix.
Thus we have
\begin{equation}
\begin{aligned}\nonumber
    &rank([B_{1},B_{2},AB_{1},AB_{2},\cdots,A^{2N-1}B_{1},A^{2N-1}B_{2}])\\
    &=rank[B_{1},B_{2},\bar{L}B_{1},\bar{L}B_{2},\cdots,\bar{L}^{2N-1}B_{1},\bar{L}^{2N-1}B_{2}]\\
    &=2\times rank([B,LB,\cdots,L^{N-1}B])\\
    &=2N,
\end{aligned}
\end{equation}
which indicates that $(A,\tilde{B})$ is controllable.
\end{proof}

Next, we consider the case of $C\neq B$. Since the equations (\ref{eq2}) and (\ref{eq3}) do not hold when $C\neq B$, the method  used in Proposition \ref{proposition1} is not applicable for such a case.  We focus on analyzing the relationship of eigenvalues and eigenvectors  between  matrices $A$ and $L$. According to the definition of $A$, if $L$ has an eigenvalue $\lambda$ with left eigenvector $v^{T}$, then $A$ has two eigenvalues $s_i(\lambda)(i=1,2)$ and two left eigenvectors $[v^{T},k_i(\lambda)v^{T}] (i=1,2)$, where $s_i(\lambda)(i=1,2)$ are roots of the following equation
\begin{equation}\label{eq6}
\begin{aligned}
&s^{2}-((-\nu_{1} \lambda +a)+(-\nu_{2} \lambda +d))s\\
&+((-\nu_{1} \lambda +a)(-\nu_{2} \lambda +d)-bc)=0,
\end{aligned}
\end{equation}
and $k_i(\lambda)(i=1,2)$ are the roots of the equation
\begin{equation}\label{eq7}
    ck^{2}+((\nu_{2}-\nu_{1})\lambda+(a-d))k-b=0.
\end{equation}
By simple calculations, the roots of (\ref{eq6}) are 
\begin{equation}\nonumber
\begin{aligned}
    &s_{1}(\lambda)=\frac{((-\nu_{1} \lambda +a)+(-\nu_{2} \lambda +d))+ \sqrt{\Delta}}{2},\\
    &s_{2}(\lambda)=\frac{((-\nu_{1} \lambda +a)+(-\nu_{2} \lambda +d))- \sqrt{\Delta}}{2},
\end{aligned}
\end{equation}
and the roots of (\ref{eq7}) are 
\begin{equation}\nonumber
\begin{aligned}
    &k_{1}(\lambda)=\frac{-((\nu_{2}-\nu_{1})\lambda+(a-d))+ \sqrt{\Delta}}{2c},\\
    &k_{2}(\lambda)=\frac{-((\nu_{2}-\nu_{1})\lambda+(a-d))- \sqrt{\Delta}}{2c},
\end{aligned}
\end{equation}
where 
$\nonumber
    \Delta=\Delta(\lambda)=((\nu_{2}-\nu_{1})\lambda+(a-d))^{2}+4bc.
$

Denote $S=S_{1}\cup S_{2}$ with
\begin{equation}\label{eq25}
\begin{aligned}
      &S_{1}=\Big \{(a,b,c,d,\nu_{1},\nu_{2})\in \mathbf{R}^{6}| there\  exist\  j,l \in \{1,2\},\\
      &\lambda_{1},\lambda_{2}\in \sigma(L), \lambda_{1}\neq \lambda_{2}, such \ that\ s_{j}(\lambda_{1})=s_{l}(\lambda_{2})\Big \}, 
\end{aligned}
\end{equation}
and 
\begin{equation}\label{eq26}
\begin{aligned}
    S_{2}=\Big \{&(a,b,c,d,\nu_{1},\nu_{2})\in \mathbf{R}^{6}| there\  exists\  \lambda \in \sigma(L), \\
    & such \ that\  \Delta(\lambda)=0 \Big \},
\end{aligned}
\end{equation}
where $\sigma(L)=\{\lambda_{i}| 1\leq i\leq N\}$ is the spectrum of $L$. It is clear that $\mathbf{m}(S)=0$, where $\mathbf{m}(\cdot)$ is the Lebesgue measure in $\mathbf{R}^{6}$.


\begin{proposition}\label{proposition2}
    If $(a,b,c,d,\nu_{1},\nu_{2})\in \mathbf{R}^{6}-S$, $b\neq 0$, and $c\neq 0$, then $(A,\tilde{B})$ is controllable if and only if $(L,\begin{bmatrix} B & C \end{bmatrix})$ is controllable.
\end{proposition}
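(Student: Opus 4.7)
The necessity is immediate from Lemma~\ref{lemma1}, so I focus on sufficiency and apply the PBH eigenvector criterion to $(A,\tilde B)$. The spectral computation in the paragraph preceding the statement already exhibits, for each left eigenvector $v^{T}$ of $L$ at $\lambda$, two left eigenvectors $[v^{T},\,k_{i}(\lambda)v^{T}]$ of $A$ at eigenvalues $s_{i}(\lambda)$, $i=1,2$. My plan is to show that no others exist, and then read PBH off this description.

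For the structural step, let $[w_{1}^{T},w_{2}^{T}]$ be any left eigenvector of $A$ at some $s\in\mathbf{C}$. Writing out both block equations and using $c\neq 0$ to solve the first for $w_{2}^{T}$ as a degree-one polynomial in $L$ applied to $w_{1}^{T}$, then substituting into the second, I obtain $w_{1}^{T}P(L)=0$, where $P$ is a quadratic polynomial in $\lambda$ whose zeros are precisely those values of $\lambda$ for which $s\in\{s_{1}(\lambda),s_{2}(\lambda)\}$. At least one zero, $\lambda_{0}$, must lie in $\sigma(L)$; excluding $S_{1}$ forbids the other zero from being a different element of $\sigma(L)$, while excluding $S_{2}$ rules out $P$ acquiring a repeated root from $\Delta(\lambda_{0})=0$. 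Together with the symmetry (hence diagonalizability) of $L$, this forces $w_{1}^{T}$ to lie in the left eigenspace of $L$ at $\lambda_{0}$, and the first block equation then yields $w_{2}^{T}=k_{i}(\lambda_{0})w_{1}^{T}$ for the appropriate $i\in\{1,2\}$, proving the structural claim.

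For the PBH step, the criterion for $(A,\tilde B)$ thus reduces to requiring $[v^{T}B,\,k_{i}(\lambda)v^{T}C]\neq 0$ for every $\lambda\in\sigma(L)$, every left eigenvector $v$ of $L$ at $\lambda$, and every $i\in\{1,2\}$. Vieta's formulas applied to (\ref{eq7}) give $k_{1}(\lambda)k_{2}(\lambda)=-b/c$, which is nonzero thanks to $b\neq 0$ (with $c\neq 0$ also ensuring that (\ref{eq7}) is genuinely quadratic so that both $k_{i}$ are defined). Because $k_{i}(\lambda)\neq 0$, the vanishing of $[v^{T}B,\,k_{i}(\lambda)v^{T}C]$ is equivalent to $v^{T}[B\ C]=0$, which coincides exactly with the PBH condition for $(L,[B\ C])$.

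The step I expect to be the main obstacle is the structural claim: pinning down that the exclusion of $S_{1}\cup S_{2}$ is precisely what is needed to prevent left eigenvectors of $A$ from appearing outside the family $\{[v^{T},k_{i}(\lambda)v^{T}]\}$, most delicately in the presence of repeated eigenvalues of $L$ or spectral collisions between the two branches $s_{1}$ and $s_{2}$. Once this is secured, the reduction to $(L,[B\ C])$ via PBH and Vieta is a short calculation.
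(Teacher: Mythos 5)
Your proposal is correct, and its overall architecture coincides with the paper's: necessity from Lemma~\ref{lemma1}, then a structural characterization of the left eigenvectors of $A$ as $[v^{T},k_{i}(\lambda)v^{T}]$ with $v^{T}$ a left eigenvector of $L$, followed by the observation that $k_{1}(\lambda)k_{2}(\lambda)=-b/c\neq 0$ (Vieta on (\ref{eq7})) so that the PBH test for $(A,\tilde B)$ collapses to the PBH test for $(L,[B\ C])$. Where you differ is in how the structural claim is established. The paper first argues that $\{[v_{i}^{T},k_{j}(\lambda_{i})v_{i}^{T}]\}$ is a basis of the whole space (this is where it uses $\notin S_{2}$, to get $k_{1}(\lambda)\neq k_{2}(\lambda)$), expands an arbitrary left eigenvector $w^{T}$ of $A$ in that basis, and uses $w^{T}A=s_{0}w^{T}$ together with $\notin S_{1}$ to kill all coefficients except those attached to a single eigenvalue of $L$. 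You instead eliminate $w_{2}^{T}$ from the block equations using $c\neq 0$, arrive at $w_{1}^{T}P(L)=0$ with $P$ quadratic whose roots are exactly the $\lambda$ with $s\in\{s_{1}(\lambda),s_{2}(\lambda)\}$, and then use $\notin S_{1}$ plus the diagonalizability of the symmetric $L$ to force $w_{1}^{T}(L-\lambda_{0}I)=0$. Your elimination argument is somewhat more direct: it avoids having to certify a basis in advance, and — as you correctly note — the symmetry of $L$ absorbs the repeated-root case of $P$ automatically. One small imprecision: a repeated root of $P$ (in $\lambda$, at fixed $s$) is not the same condition as $\Delta(\lambda_{0})=0$ (a repeated root of (\ref{eq6}) in $s$ at fixed $\lambda_{0}$), so your remark that excluding $S_{2}$ prevents the repeated root is not quite the right accounting; in your route the repeated-root case is handled by diagonalizability alone, and $S_{2}$ really only enters (as in the paper) to keep the two branches $k_{1},k_{2}$ distinct — a fact your argument does not actually need, since $k_{i}\neq 0$ already follows from $k_{1}k_{2}=-b/c$. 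This does not affect the validity of the proof.
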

\begin{proof}
By Lemma \ref{lemma1}, we have necessity. Now, we consider sufficiency. By $b\neq 0$, $c\neq 0$ and $(a,b,c,d,\nu_{1},\nu_{2})\notin S_{2}$, the equation (\ref{eq7}) has two different nonzero roots $k_{j}(\lambda) (j=1,2)$, where $S_{2}$ is defined in (\ref{eq26}). The set of left eigenvectors  of $A$, i.e., $\{[v^{T},k_{j}(\lambda)v^{T}] | \lambda \in \sigma(L), v^{T}L=\lambda v^{T}, j=1,2\}$,  constructs a basis of $\mathbf{R}^{2N}$. For a given eigenvalue $s_{0}$ of $A$,  each left eigenvectors $w^{T}$  can be expressed as
\begin{equation}\label{eq22}
w^{T}=\sum_{i=1}^{N}\sum_{j=1}^{2}l_{ij}[v_{i}^{T},k_{j}(\lambda_{i})v_{i}^{T}].
\end{equation}
By $w^{T}A=s_{0} w^{T}$, we can deduce that for $1\leq i\leq N$ and $j=1,2$
\begin{equation}\label{eq21}
    l_{ij}=0,\ if\ s_{j}(\lambda_{i})\neq s_{0}.
\end{equation}
By the condition $(a,b,c,d,\nu_{1},\nu_{2})\notin S_{2}$, we have $s_{1}(\lambda_{i})\neq s_{2}(\lambda_{i})$ for $1\leq i \leq N$. Based on (\ref{eq21}), we have $l_{i1}=0$ or $l_{i2}=0$ for $1\leq i \leq N$ by reduction to absurdity. Without loss of generality, we assume that $l_{i2}=0$ for $1\leq i \leq N$, $s_{1}(\lambda_{i})=s_{0}$ for $1\leq i\leq p (1\leq p\leq N)$, and $l_{i1}=0$ for $p+1\leq i\leq N$. As $s_{1}(\lambda_{1})=\cdots=s_{1}(\lambda_{p})=s_{0}$ and $(a,b,c,d,\nu_{1},\nu_{2})\notin S_{1}$, we have $\lambda_{1}=\cdots=\lambda_{p}$ and $k_{j}(\lambda_{1})=\cdots=k_{j}(\lambda_{p}) \triangleq k_{j} (j=1,2)$, where $S_{1}$ is defined in (\ref{eq25}). Then by (\ref{eq22}) and the above assumption, we have $w^{T}=[u^{T},k_{1}u^{T}]$, where $u=\sum_{i=1}^{p}l_{i1}v_{i}$ is a left eigenvector of $L$ subject to the eigenvalue $\lambda_{1}$. Thus, every left eigenvector can be expressed as the form of $[v^{T},lv^{T}]$, where $v^{T}$ is a left eigenvector of $L$ and $l\in \mathbf{C}, l\neq 0$. 

If $(A, \tilde{B})$ is uncontrollable, then by PBH criterion, there exists a left eigenvector $[v^{T},lv^{T}]$ such that $[v^{T},lv^{T}]\tilde{B}=0$, where $v^{T}$ is a left eigenvector of $L$. Thus, we have $v^{T}\begin{bmatrix} B & C \end{bmatrix}=0$, which indicates that $(L,\begin{bmatrix} B & C \end{bmatrix})$ is uncontrollable. By reduction to absurdity, the sufficiency is proved.
\end{proof}

For both cases $B=C$ and $B\neq C$, we have proved that the controllability of $(A,\tilde{B})$ is equivalent to the controllability of $(L,\begin{bmatrix} B & C \end{bmatrix})$ except for a  set of zero Lebesgue measure.  By now, the controllability problem of a second-order multiagent system is transferred to that of a first-order multiagent system. Thus, the analysis of the controllability  problem can be simplified.

In the following section, we investigate the controllability of $(L,\bar{B})$ with $\bar{B}=\begin{bmatrix} B & C \end{bmatrix}$. 


\section{Controllability of $(L,\bar{B})$}\label{section4}

In this section, we consider controllability of the system
    \begin{equation}\label{eq8}
        \dot{z}=Lz+\bar{B}u,
    \end{equation}
where $z\in \mathbf{R}^{N}$, $u\in \mathbf{R}^{\rho}$, $L$ is the Laplacian matrix of the graph,  $\bar{B}=[e^{(N)}_{i_{1}},\cdots,e^{(N)}_{i_{\rho}}]$ is the control matrix with $1\leq i_{k}\leq N (1\leq k\leq \rho)$ and $\rho$ is the number of controlled nodes. For general graphs, it is hard to give a detailed analysis for the controllability of the system (\ref{eq8}) since eigenvalues and eigenvectors of $L$ can not be explicitly expressed. Thus, we consider the controllability on two typical graphs: grid graph and cylinder grid graph, which are shown in Fig.~\ref{fig1}.

A direct lemma derived from PBH Criterion is introduced, which gives a necessary condition for the controllability of $(L,\bar{B})$.
\begin{lemma}\label{lemma2}
  If $L$ has an eigenvalue with multiplicity $r$ and $rank(\bar{B})<r$, then $(L,\bar{B})$ is uncontrollable.
\end{lemma}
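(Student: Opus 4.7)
The plan is to apply the PBH rank test directly. Recall that PBH says $(L,\bar B)$ is controllable iff the $N \times (N+\rho)$ matrix $[\lambda I - L, \bar B]$ has full row rank $N$ for every $\lambda \in \sigma(L)$. The goal is to exhibit an eigenvalue $\lambda$ at which this test fails, namely the one of multiplicity $r$ appearing in the hypothesis.

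First I would pin down the multiplicity issue: since $L$ is the Laplacian of an undirected graph it is real symmetric, hence diagonalizable, so the algebraic and geometric multiplicities of any eigenvalue coincide. In particular, for the eigenvalue $\lambda$ of multiplicity $r$, the dimension of its eigenspace equals $r$, so $\mathrm{rank}(\lambda I - L) = N - r$. This is the only ``structural'' fact needed.

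Then I would just use subadditivity of rank:
\begin{equation}\nonumber
\mathrm{rank}([\lambda I - L,\ \bar B]) \le \mathrm{rank}(\lambda I - L) + \mathrm{rank}(\bar B) = (N-r) + \mathrm{rank}(\bar B).
\end{equation}
By the assumption $\mathrm{rank}(\bar B) < r$, the right-hand side is strictly less than $N$, so $[\lambda I - L, \bar B]$ does not have full row rank. By PBH, $(L,\bar B)$ is uncontrollable.

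There is no real obstacle here; the lemma is essentially a one-line corollary of PBH together with symmetry of $L$. The only thing worth being explicit about is the appeal to symmetry, because PBH per se uses geometric multiplicity while the statement of the lemma just says ``multiplicity,'' and without diagonalizability the bound $\mathrm{rank}(\lambda I - L) = N - r$ could fail (only $\mathrm{rank}(\lambda I - L) \le N - g$ with $g$ the geometric multiplicity would hold in general). Alternatively, one could phrase the argument dually by constructing a nonzero left eigenvector $w$ in the $r$-dimensional eigenspace with $w^T \bar B = 0$ via a dimension count on the linear map $w \mapsto w^T\bar B$ restricted to that eigenspace, but the rank inequality above is the shortest route.
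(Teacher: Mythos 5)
Your proof is correct and follows exactly the route the paper intends: the paper states Lemma~\ref{lemma2} without proof as ``a direct lemma derived from PBH Criterion,'' and your argument (symmetry of $L$ gives $\mathrm{rank}(\lambda I - L) = N - r$, then rank subadditivity shows $[\lambda I - L, \bar B]$ cannot have full row rank) is the standard way to make that derivation explicit. Your remark about algebraic versus geometric multiplicity is a worthwhile precision that the paper glosses over.
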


\subsection{Controllability of system (\ref{eq8}) on grid graphs}\label{subsection4.1}

In this subsection, we consider controllability of the system (\ref{eq8}) on $m\times n$ grid graphs $\mathcal{L}_{m,n}$, and the Laplacian matrix of $\mathcal{L}_{m,n}$ is denoted as $L(m,n)$. 

 We know that the grid graph $\mathcal{L}_{m,n}$ is the Cartesian product of two path graphs $\mathcal{P}_{m}$ and $\mathcal{P}_{n}$, and  we first introduce a lemma which gives the explicit expression of  eigenvalues and eigenvectors of  the Laplacian matrix $L_{\mathcal{P}}(m)$ of the path graph $\mathcal{P}_{m}$. 
 
\begin{lemma}[\cite{Yueh2005}]\label{lemma3}
  The eigenvalues $\lambda^{(\mathcal{P})}_{0},\cdots,$ $\lambda^{(\mathcal{P})}_{m-1}$ of $L_{\mathcal{P}}(m)$ are given by
\begin{equation}
  \nonumber
  \lambda^{(\mathcal{P})}_{\alpha}=2-2\cos{\frac{\alpha\pi}{m}},\  \alpha=0,1,\cdots,m-1,
\end{equation}
and the corresponding eigenvectors $v_{\alpha}=[v_{\alpha,1},\cdots,v_{\alpha,m}]^{T}$ are given by
\begin{equation}
  \label{eq50}
     v_{\alpha,k}=\begin{cases}
    \frac{1}{\sqrt{m}},&\mbox{if $\alpha = 0$;}\\
    \sqrt{\frac{2}{m}}\cos{\frac{\alpha(2k-1)\pi}{2m}},&\mbox{otherwise;}
    \end{cases}
\end{equation}
for  $k=1,2,\cdots,m$.
\end{lemma}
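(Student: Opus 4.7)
The plan is to solve the eigenvalue equation $L_{\mathcal{P}}(m) v = \lambda v$ componentwise as a discrete boundary value problem. Reading off the rows of $L_{\mathcal{P}}(m)$, one obtains the interior three-term recurrence
\[
-v_{k-1} + 2 v_k - v_{k+1} = \lambda v_k, \qquad 2 \leq k \leq m-1,
\]
together with the two boundary identities $v_1 - v_2 = \lambda v_1$ and $-v_{m-1} + v_m = \lambda v_m$. The convenient reformulation is to introduce fictitious nodes $v_0 := v_1$ and $v_{m+1} := v_m$ (the discrete analog of homogeneous Neumann boundary conditions); the interior recurrence then holds uniformly for $1 \leq k \leq m$, at the price of imposing $v_0 = v_1$ and $v_{m+1} = v_m$.

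Next, I would set $\lambda = 2 - 2\cos\theta$, so that the characteristic polynomial $r^2 - (2-\lambda) r + 1$ has roots $e^{\pm i\theta}$ and the general real solution of the recurrence takes the form $v_k = A\cos((k - \tfrac{1}{2})\theta) + B\sin((k - \tfrac{1}{2})\theta)$. The half-integer shift is the key bookkeeping trick: the left boundary condition $v_0 = v_1$ reduces to $2B\sin(\theta/2) = 0$ and forces $B = 0$ automatically, while the right boundary condition $v_{m+1} = v_m$ becomes, after a sum-to-product step, $\sin(m\theta)\sin(\theta/2) = 0$. Restricting $\theta$ to $[0,\pi]$ (which is enough by the symmetry $\theta \mapsto -\theta$ and the fact that $\lambda$ ranges over $[0,4]$), this quantizes $\theta = \alpha\pi/m$ for $\alpha = 0, 1, \ldots, m-1$, yielding $\lambda_\alpha = 2 - 2\cos(\alpha\pi/m)$ with unnormalized eigenvectors $v_{\alpha,k} \propto \cos(\alpha(2k-1)\pi/(2m))$.

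Finally, the normalization for $\alpha \geq 1$ follows from $2\cos^2\phi = 1 + \cos(2\phi)$ together with the standard identity $\sum_{k=1}^{m}\cos(\alpha(2k-1)\pi/m) = 0$, giving $\|v_\alpha\|^2 = m/2$ and hence the factor $\sqrt{2/m}$; the case $\alpha = 0$ is the constant eigenvector, normalized by $1/\sqrt{m}$. Since $\cos(\alpha\pi/m)$ is strictly monotone on $\{0, 1, \ldots, m-1\}$, all $m$ eigenvalues are distinct, so the collection $\{v_0, \ldots, v_{m-1}\}$ is automatically a complete eigenbasis of the symmetric matrix $L_{\mathcal{P}}(m)$. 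The only real subtlety is choosing the half-integer phase shift in the trigonometric ansatz: an unshifted ansatz $A\cos(k\theta) + B\sin(k\theta)$ couples both boundary conditions nontrivially and makes the quantization clumsier, whereas the shift decouples the left boundary and makes the right boundary an immediate quantization condition on $m\theta$.
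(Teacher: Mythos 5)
Your derivation is correct. Note, however, that the paper does not prove this lemma at all: it is quoted verbatim from the cited reference (Yueh, 2005), which treats eigenpairs of general tridiagonal matrices with perturbed corners, so there is no in-paper argument to compare against. Your proof is a clean, self-contained alternative: the fictitious-node reformulation ($v_0=v_1$, $v_{m+1}=v_m$) correctly encodes the first and last rows of $L_{\mathcal{P}}(m)$ as discrete Neumann conditions, the half-integer phase shift does decouple the left boundary (forcing $B=0$) and turns the right boundary into $\sin(m\theta)\sin(\theta/2)=0$, and the normalization via $2\cos^2\phi=1+\cos 2\phi$ and $\sum_{k=1}^{m}\cos\bigl(\alpha(2k-1)\pi/m\bigr)=0$ is right. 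One small point worth making explicit: the quantization condition on $[0,\pi]$ also admits $\theta=\pi$, where the characteristic roots coincide and your ansatz degenerates (it yields the zero vector, and indeed $\lambda=4$ is not an eigenvalue of the path Laplacian); your closing completeness argument --- $m$ distinct eigenvalues of an $m\times m$ symmetric matrix already exhaust the spectrum --- disposes of this, but it deserves a sentence so the reader sees why $\alpha=m$ is legitimately dropped.
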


Based on the above lemma, we have the following results about the eigenvalues and eigenvectors of the Laplacian matrix $L(m,n)$ of the grid graph $\mathcal{L}_{m,n}$.

\begin{lemma}\label{lemma8}
    $L(m,n)$ has the following $mn$ eigenvalues,
\begin{equation}\label{eq17}
        \lambda_{\alpha \beta}= 4-2\big(\cos{\frac{\alpha\pi}{m}}+\cos{\frac{ \beta \pi}{n}}\big), 0\leq \alpha\leq m-1, 0\leq  \beta \leq n-1.
    \end{equation}
    The eigenvector subject to $\lambda_{\alpha \beta }$ is $v_{\alpha}\otimes w_{ \beta }$, where $\otimes$ denotes the Kronecker product, $v_{\alpha}=[v_{\alpha,1},\cdots,v_{\alpha,m}]^{T}$ with $v_{\alpha,k}$ defined in (\ref{eq50}) and $w_{ \beta }=[w_{ \beta ,1},\cdots,w_{ \beta ,m}]^{T}$ with
    \begin{equation}
     \label{eq51}
         w_{ \beta ,l}=\begin{cases}
        \frac{1}{\sqrt{n}},&\mbox{if $\beta = 0$;}\\
        \sqrt{\frac{2}{n}}\cos{\frac{\beta(2l-1)\pi}{2n}},&\mbox{otherwise;}
        \end{cases}
    \end{equation}     
     for $1\leq l\leq n$.
\end{lemma}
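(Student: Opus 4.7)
The plan is to obtain Lemma 8 as an immediate corollary of Lemma 3 together with the Kronecker-sum structure of the Laplacian of a Cartesian product, both of which are already at our disposal. No new calculation with grid graphs is needed; the product structure does all the work.

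First I would recall, from the preliminaries in Section \ref{subsection2.1}, that the Laplacian of the Cartesian product $\mathcal{G}_1\Box\mathcal{G}_2$ equals the Kronecker sum $L_1\oplus L_2=L_1\otimes I+I\otimes L_2$ of the factor Laplacians, and that if $(\lambda_i,v_i)$ and $(\mu_j,w_j)$ are eigenpairs of $L_1$ and $L_2$ respectively, then $(\lambda_i+\mu_j,v_i\otimes w_j)$ is an eigenpair of $L_1\oplus L_2$. Since $\mathcal{L}_{m,n}=\mathcal{P}_m\Box\mathcal{P}_n$, this yields
\begin{equation}\nonumber
L(m,n)=L_{\mathcal{P}}(m)\oplus L_{\mathcal{P}}(n).
\end{equation}

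Next I would invoke Lemma \ref{lemma3} to read off the spectrum and eigenvectors of the two factors: the eigenvalues of $L_{\mathcal{P}}(m)$ are $\lambda_\alpha^{(\mathcal{P})}=2-2\cos(\alpha\pi/m)$ with eigenvectors $v_\alpha$ as in (\ref{eq50}), and analogously for $L_{\mathcal{P}}(n)$ with eigenvalues $2-2\cos(\beta\pi/n)$ and eigenvectors $w_\beta$ as in (\ref{eq51}). Summing the two eigenvalues and tensoring the two eigenvectors produces, for each pair $(\alpha,\beta)$ with $0\leq\alpha\leq m-1$ and $0\leq\beta\leq n-1$, exactly the eigenpair
\begin{equation}\nonumber
\Bigl(4-2\bigl(\cos\tfrac{\alpha\pi}{m}+\cos\tfrac{\beta\pi}{n}\bigr),\ v_\alpha\otimes w_\beta\Bigr)
\end{equation}
claimed in the statement.

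Finally, I would observe that this construction delivers $mn$ eigenpairs, which matches the order $mn$ of $L(m,n)$, so we have accounted for the entire spectrum (with multiplicities). In particular, since $L_{\mathcal{P}}(m)$ and $L_{\mathcal{P}}(n)$ are symmetric, the families $\{v_\alpha\}$ and $\{w_\beta\}$ are orthonormal bases of $\mathbf{R}^m$ and $\mathbf{R}^n$, so $\{v_\alpha\otimes w_\beta\}$ is an orthonormal basis of $\mathbf{R}^{mn}$, confirming that no eigenvectors are missed.

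Honestly, there is no real obstacle here: the only thing to watch for is a clean citation of the Kronecker-sum eigenpair formula from Section \ref{subsection2.1} and a remark that the count $mn$ exhausts the spectrum. A short direct verification $L(m,n)(v_\alpha\otimes w_\beta)=\lambda_{\alpha\beta}(v_\alpha\otimes w_\beta)$ via $(L_{\mathcal{P}}(m)\otimes I_n+I_m\otimes L_{\mathcal{P}}(n))(v_\alpha\otimes w_\beta)$ could be inserted if one prefers a self-contained argument, but it adds nothing beyond the general Kronecker-sum fact already stated.
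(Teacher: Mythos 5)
Your proposal is correct and follows exactly the route the paper intends: the paper states Lemma \ref{lemma8} without a written proof, relying implicitly on the Kronecker-sum eigenpair fact and the identity $L(m,n)=L_{\mathcal{P}}(m)\oplus L_{\mathcal{P}}(n)$ from Section \ref{subsection2.1} together with Lemma \ref{lemma3}. Your added remark that orthonormality of $\{v_\alpha\otimes w_\beta\}$ guarantees the $mn$ eigenpairs exhaust the spectrum is a small but worthwhile completion of that implicit argument.
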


From (\ref{eq17}), we see that ``$\cos{\frac{\alpha\pi}{m}}+\cos{\frac{ \beta \pi}{n}}$" reflects the difference between $\lambda_{\alpha \beta }$ with other eigenvalues, so we call $\cos{\frac{\alpha\pi}{m}}+\cos{\frac{ \beta \pi}{n}}$ the characteristic part of  $\lambda_{\alpha \beta }$. Denote $V=[v_{0},\cdots,v_{m-1}]^{T}$ and $W=[w_{0},\cdots,w_{n-1}]^{T}$.
\begin{remark}\label{remark1}
  Since $\{v_{\alpha}\}_{0\leq \alpha\leq m-1}$ are unit eigenvectors subject to different eigenvalues of $L_{\mathcal{P}}(m)$, $V$ is an orthogonal matrix. Similarly, $W$ is an orthogonal matrix.
\end{remark}

We first consider the case of $m=n$. We denote $L(n,n)$ as the Laplacian matrix of the grid graph $\mathcal{L}_{n,n}$. Based on Lemma \ref{lemma8}, we have the following result on the largest multiplicity of the eigenvalues of $L(n,n)$.

\begin{corollary}\label{corollary2}
    For $n\geq 3$, the largest multiplicity of eigenvalues of $L(n,n)$ is $n-1$.
\end{corollary}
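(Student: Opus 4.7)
My plan is to exploit Lemma \ref{lemma8}: the multiplicity of an eigenvalue $\lambda$ of $L(n,n)$ equals the number of ordered pairs $(\alpha,\beta)\in\{0,\dots,n-1\}^{2}$ satisfying $\cos(\alpha\pi/n)+\cos(\beta\pi/n)=c$, where $c=(4-\lambda)/2$. Writing $N(c)$ for this count, the task becomes showing that $\max_{c\in\mathbf{R}}N(c)=n-1$.

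I will realize the lower bound by taking $c=0$, i.e.\ $\lambda=4$. The relation $\cos(\beta\pi/n)=-\cos(\alpha\pi/n)=\cos((n-\alpha)\pi/n)$ together with injectivity of $\cos$ on $[0,\pi]$ forces $\beta=n-\alpha$. The choice $\alpha=0$ is excluded because then $\beta=n$ is out of range, but every $\alpha\in\{1,\dots,n-1\}$ contributes exactly one pair, so $N(0)=n-1$.

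For the matching upper bound I will first note that the values $\cos(k\pi/n)$ for $k=0,\dots,n-1$ are pairwise distinct, so each $\alpha$ has at most one partner $\beta$; hence $N(c)\le n$ for every $c$. To exclude $N(c)=n$, I observe that equality would make $\alpha\mapsto\beta(\alpha)$ a bijection (in fact an involution) of $\{0,\dots,n-1\}$. Summing the defining relation over $\alpha$ and using that the bijection preserves the total $\sum_{k=0}^{n-1}\cos(k\pi/n)$ yields $nc=2\sum_{k=0}^{n-1}\cos(k\pi/n)$. A short geometric-sum calculation (the real part of $\sum_{k=0}^{n-1}e^{ik\pi/n}$) evaluates this sum to $1$, so the only candidate value for trouble is $c=2/n$.

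The delicate step, and the one I expect to be the main obstacle, is ruling out $N(2/n)=n$ for $n\ge 3$. For $\alpha=0$ a partner $\beta$ would require $\cos(\beta\pi/n)=2/n-1$, which is rational. Niven's theorem (the only rational values of $\cos\theta$ for $\theta$ a rational multiple of $\pi$ are $0,\pm 1/2,\pm 1$) then forces $2/n-1\in\{0,\pm 1/2,\pm 1\}$; among integers $n\ge 3$ only $n=4$ survives, where $2/n-1=-1/2$, and a direct check of the four values $\cos(\beta\pi/4)\in\{1,\sqrt{2}/2,0,-\sqrt{2}/2\}$ shows that $-1/2$ is not attained. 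So $\alpha=0$ never has a partner when $c=2/n$, whence $N(2/n)\le n-1$, completing the bound and hence the corollary.
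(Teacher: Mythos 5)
Your proof is correct, but the upper-bound argument takes a genuinely different route from the paper's. The paper argues by contradiction directly on a hypothetical eigenvalue of multiplicity $n$: since $\cos(\alpha\pi/n)$ is strictly decreasing in $\alpha$, the $n$ pairs (which must have pairwise distinct $\alpha$'s and pairwise distinct $\beta$'s) are forced into the anti-diagonal configuration $(\alpha_i,\beta_i)=(i,\,n-1-i)$, and then the single equation $\lambda_{0,n-1}=\lambda_{1,n-2}$ collapses to $1+\cos(2\pi/n)=2\cos(\pi/n)$, which contradicts the double-angle identity $1+\cos(2\pi/n)=2\cos^2(\pi/n)<2\cos(\pi/n)$ for $n\ge 3$. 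That argument is completely self-contained. You instead run a global summation over the bijection $\alpha\mapsto\beta(\alpha)$ to pin down the unique candidate level $c=2/n$ (equivalently $\lambda=4-4/n$), and then kill it with Niven's theorem plus a hand check at $n=4$. Both are sound; yours buys an explicit identification of the only level that could conceivably have full multiplicity, at the cost of importing Niven's theorem as an external ingredient (in fairness, the paper itself leans on heavier trigonometric diophantine machinery elsewhere, so this is in its spirit), whereas the paper's monotonicity argument is shorter and needs nothing beyond the double-angle formula. One presentational point common to both proofs: the identification of the multiplicity of $\lambda$ with the number of pairs $(\alpha,\beta)$ satisfying $\lambda_{\alpha\beta}=\lambda$ rests on the $n^2$ vectors $v_\alpha\otimes w_\beta$ forming an orthogonal (hence linearly independent) eigenbasis; you assume this silently, so it is worth a one-line appeal to the orthogonality of $V$ and $W$.
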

\begin{proof}
By Lemma \ref{lemma8}, for $ \beta =n-\alpha$ with $1\leq \alpha\leq n-1$, we have $\lambda_{\alpha \beta }= 4$. Thus, $L(n,n)$ has an eigenvalue with multiplicity  $n-1$. In the following, we show that $L(n,n)$ has no eigenvalues whose multiplicities are strictly larger than $n-1$ by reduction to absurdity. If $L(n,n)$ has an eigenvalue whose multiplicity is larger than $n-1$, then there exist two sequences $\{\alpha_{i}, 0\leq \alpha_i\leq n-1\}_{0\leq i\leq n-1}$ and $\{\beta_{i}, 0\leq \beta_i\leq n-1\}_{0\leq i\leq n-1}$ such that
\begin{equation}\label{eq23}
    \lambda_{\alpha_{0},\beta_{0}}=\lambda_{\alpha_{1},\beta_{1}}=\cdots=\lambda_{\alpha_{n-1},\beta_{n-1}},
\end{equation}
where $\alpha_{i}\neq \alpha_{j}$ and $\beta_{i}\neq \beta_{j}$ for any $0\leq i < j\leq n-1$.

Without loss of generality, we assume that $\alpha_{0}<\alpha_{1}<\cdots<\alpha_{n-1}$. Then by (\ref{eq17}) and (\ref{eq23}), we have $\beta_{0}>\beta_{1}>\cdots >\beta_{n-1}$. Thus, for $0\leq i\leq n-1$, we have $\alpha_{i}=n-1-\beta_{i}=i$. By the first equation in (\ref{eq23}), we have $1+\cos{\frac{2\pi}{n}}=2\cos{\frac{\pi}{n}}$, which contradicts with the fact $1+\cos{\frac{2\pi}{n}}=2\cos^{2}{\frac{\pi}{n}}<2\cos{\frac{\pi}{n}}$. This completes the proof of Corollary \ref{corollary2}. 
\end{proof}

We next introduce a lemma about orthogonal matrices, which will be used in the proof of the following theorems.
\begin{lemma}[\cite{Horn2012}]\label{lemma4}
    Let $W=(w_{ij})_{n\times n}$ be an orthogonal matrix. Denote
    \begin{equation}\label{eq24}
        W^{*}_{ij} = \begin{bmatrix}
        w_{11}&\cdots & w_{1,j-1}& w_{1,j+1}&\cdots & w_{1,n}\\
        \vdots&\ddots & \vdots& \vdots&\ddots & \vdots\\
        w_{i-1,1}&\cdots & w_{i-1,j-1}& w_{i-1,j+1}&\cdots & w_{i-1,n}\\
        w_{i+1,1}&\cdots & w_{i+1,j-1}& w_{i+1,j+1}&\cdots & w_{i+1,n}\\
        \vdots&\ddots & \vdots& \vdots&\ddots & \vdots\\
        w_{n,1}&\cdots & w_{n,j-1}& w_{n,j+1}&\cdots & w_{n,n}
        \end{bmatrix}.
    \end{equation}
Then  for $1\leq i,j\leq n$, $det(W^{*}_{ij})=0$ if and only if $w_{ij}=0$.
\end{lemma}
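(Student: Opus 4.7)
The plan is to derive the stated equivalence from the adjugate (cofactor) formula for the inverse, exploiting the defining property $W^{-1} = W^T$ of an orthogonal matrix.

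First I would recall that for any invertible matrix $W$, Cramer's rule gives
\begin{equation}\nonumber
(W^{-1})_{ji} = \frac{(-1)^{i+j}\det(W^*_{ij})}{\det(W)},
\end{equation}
where $W^*_{ij}$ is the $(n-1)\times(n-1)$ minor matrix defined in (\ref{eq24}). Since $W$ is orthogonal, $\det(W) = \pm 1$, so the denominator is nonzero and this identity is well-defined for every pair $(i,j)$.

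Next I would use orthogonality, $W^{-1} = W^T$, which says $(W^{-1})_{ji} = (W^T)_{ji} = w_{ij}$. Combining this with the cofactor identity yields
\begin{equation}\nonumber
w_{ij} = \frac{(-1)^{i+j}\det(W^*_{ij})}{\det(W)},
\end{equation}
or equivalently $\det(W^*_{ij}) = (-1)^{i+j}\det(W)\, w_{ij}$. Because $\det(W) \neq 0$, the right-hand side vanishes precisely when $w_{ij} = 0$, establishing both directions of the equivalence simultaneously.

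There is no real obstacle here: the whole argument reduces to recognizing that the adjugate formula lets one read off each entry of $W^{-1} = W^T$ as a signed minor determinant of $W$ divided by the (nonzero) determinant of $W$. The only things to be careful about are the index conventions (the $(j,i)$ entry of $W^{-1}$ corresponds to deleting the $i$th row and $j$th column of $W$, which matches the definition of $W^*_{ij}$ in (\ref{eq24})) and the sign factor $(-1)^{i+j}$, which plays no role in the zero/nonzero conclusion.
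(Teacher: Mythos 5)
Your argument is correct: the adjugate identity $(W^{-1})_{ji} = (-1)^{i+j}\det(W^{*}_{ij})/\det(W)$ combined with $W^{-1}=W^{T}$ and $\det(W)=\pm 1$ gives $\det(W^{*}_{ij}) = (-1)^{i+j}\det(W)\,w_{ij}$, from which the equivalence is immediate. The paper states this lemma as a cited result from Horn and Johnson and supplies no proof of its own, so there is nothing to compare against; your derivation is the standard one and is complete.
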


\begin{figure}
\begin{center}
\includegraphics[width=4.5cm]{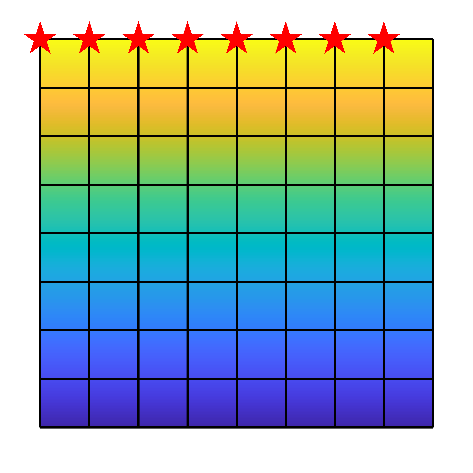} 
  \caption{This figure illustrates  how to choose control nodes on the grid graph $\mathcal{L}_{m,n}$ such that $(L(m,n),\bar{B})$ is controllable for $m=n$, where the control nodes are marked by red stars.}
  \label{fig4}                            
\end{center}                                 
\end{figure}

Now, we give a theorem about the minimal number of control nodes such that $(L(n,n),\bar{B})$ is controllable. The meaning of ``minimal number" is  twofold. One is that there exist at least a set of control nodes whose cardinality is exactly the given minimal number such that $(L(n,n),\bar{B})$ is controllable. Another is that if the number of control nodes is less than the given minimal number, then $(L(n,n),\bar{B})$ is uncontrollable no matter how we choose control nodes.

\begin{theorem}\label{theorem1}
    Denote $\varphi (n,n)$ as the minimal number of control nodes such that $(L(n,n),\bar{B})$ is controllable, that is,
        \begin{equation}\nonumber
        \begin{aligned}
            &\varphi (n,n)= \min \big \{\rho\in \mathbf{Z}^{+}| there\  exist\  i_{1},...,i_{\rho},\ 1\leq i_{1},...,i_{\rho}\\
            &\leq n^2,\ such \ that\ (L(n,n),[e^{(n^{2})}_{i_{1}},\cdots,e^{(n^{2})}_{i_{\rho}}])\  is\\  &controllable \big \}.
        \end{aligned}
    \end{equation}
    Then we have 
    \begin{equation}\nonumber
            \varphi (n,n)=\begin{cases}
            n,&\mbox{for $n=1,2$;}\\
            $n$-1,&\mbox{for $n\geq 3$.}
            \end{cases}
            \end{equation}
\end{theorem}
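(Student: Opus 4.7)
The plan is to establish the two required bounds separately. The lower bound $\varphi(n,n)\geq n-1$ for $n\geq 3$ is immediate from Corollary~\ref{corollary2}, which states that $L(n,n)$ has an eigenvalue (namely $\lambda=4$, realized by the $n-1$ pairs $(\alpha,n-\alpha)$ with $1\leq \alpha\leq n-1$) of multiplicity $n-1$, together with Lemma~\ref{lemma2}, which rules out controllability whenever $\mathrm{rank}(\bar{B})$ is smaller than some multiplicity. The edge cases are direct: for $n=1$ the matrix $L(1,1)=0$ is $1\times 1$ and trivially needs one input, so $\varphi(1,1)=1$; for $n=2$ the eigenvalues are $\{0,2,2,4\}$, so the doubled eigenvalue $\lambda=2$ forces $\varphi(2,2)\geq 2$, and a concrete two-node choice (e.g.\ as in Fig.~\ref{fig4}) achieves equality via a short Kalman-rank computation.

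For the upper bound with $n\geq 3$ the plan is to exhibit an explicit set $\mathcal{I}=\{(i_k,j_k):1\leq k\leq n-1\}$ of $n-1$ control nodes, as indicated in Fig.~\ref{fig4}, and to verify the PBH condition at every eigenvalue of $L(n,n)$. By Lemma~\ref{lemma8} every left eigenvector of $L(n,n)$ has the tensor form $v_\alpha\otimes w_\beta$, and the $\lambda$-eigenspace is spanned by those tensors for which $(\alpha,\beta)\in J(\lambda):=\{(\alpha,\beta):\lambda_{\alpha\beta}=\lambda\}$. Controllability is therefore equivalent to the condition that, for every eigenvalue $\lambda$, the evaluation matrix
\begin{equation}\nonumber
M(\lambda)=\bigl(v_{\alpha,i_k}\,w_{\beta,j_k}\bigr)_{(\alpha,\beta)\in J(\lambda),\,1\leq k\leq n-1}
\end{equation}
has full row rank $|J(\lambda)|$. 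Identifying $J(\lambda)$ amounts to solving the trigonometric diophantine equation $\cos(\alpha_1\pi/n)+\cos(\beta_1\pi/n)=\cos(\alpha_2\pi/n)+\cos(\beta_2\pi/n)$, whose solutions, by the results of~\cite{Conway1976,Wlodarski1969}, decompose into the obvious swap $(\alpha_2,\beta_2)=(\beta_1,\alpha_1)$, the one-parameter family $\alpha+\beta=n$ giving $\lambda=4$, and a restricted list of other coincidences.

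The main obstacle is the $(n-1)\times(n-1)$ block $M(4)$. The strategy is to choose $\mathcal{I}$ so that $M(4)$ factors as a diagonal matrix with nonvanishing entries times an $(n-1)\times(n-1)$ submatrix of the orthogonal matrix $W$ (or $V$) with one row and one column deleted. A natural candidate is to place the $n-1$ controls in a single row or column of the grid so that, say, $v_{\alpha,i_k}$ depends only on $\alpha$ (constant in $k$) and $w_{n-\alpha,j_k}$ traces out an $(n-1)\times(n-1)$ minor of $W$. Lemma~\ref{lemma4} then reduces nonsingularity of that minor to the nonvanishing of a single entry of $W$, which is immediate from the explicit formula (\ref{eq51}). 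The diagonal choice $(k,k)$ must be avoided, since the involution $(\alpha,\beta)\leftrightarrow(\beta,\alpha)$ leaves every diagonal coordinate fixed and thus produces a nonzero eigenvector combination that vanishes on the entire diagonal; any configuration that breaks this swap symmetry, such as a shifted diagonal or a row-type placement, will do.

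For the remaining eigenvalues one has $|J(\lambda)|\leq 2$ generically (only the trivial swap contributes), and $M(\lambda)$ is then at most $2\times(n-1)$. Its row rank is $2$ provided the two rows $v_{\alpha,i_k}w_{\beta,j_k}$ and $v_{\beta,i_k}w_{\alpha,j_k}$ are not proportional over $1\leq k\leq n-1$, which can be deduced from the orthogonality of $V$ and $W$ (Remark~\ref{remark1}) together with Lemma~\ref{lemma4}: two rows of an orthogonal matrix cannot coincide, up to scalar, on all but one index. The subtlest step I anticipate is ruling out the exceptional coincidences of the trigonometric diophantine equation, i.e.\ showing that no other eigenvalue of $L(n,n)$ has multiplicity reaching $n-1$; Corollary~\ref{corollary2} already states this bound, so once the rank of $M(\lambda)$ is verified case by case the argument closes.
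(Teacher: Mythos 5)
Your proposal matches the paper's proof in all essentials: the lower bound comes from Lemma~\ref{lemma2} together with the multiplicity-$(n-1)$ eigenvalue $\lambda=4$ of Corollary~\ref{corollary2}, and the upper bound is obtained by placing the $n-1$ controls in the first row so that each PBH evaluation matrix factors as a nonvanishing diagonal times a minor of the orthogonal matrix $W$, whose nonsingularity is exactly Lemma~\ref{lemma4}. The only difference is cosmetic: the paper needs no case analysis of the trigonometric diophantine equation in the square case (and your claim that $|J(\lambda)|\leq 2$ for the remaining eigenvalues is not literally true, e.g.\ when $6\mid n$), because for every eigenvalue the index pairs automatically have pairwise distinct $\beta$'s and multiplicity at most $n-1$, so the same orthogonal-minor argument covers all eigenvalues uniformly.
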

\begin{proof}
For $n=1,2$, it is clear that $\varphi (n,n) = n$ by directly calculating the rank of controllability matrices.

For $n\geq 3$, by Lemma \ref{lemma2} and Corollary \ref{corollary2}, we have $\varphi (n,n)\geq n-1$. In the following, we prove $\varphi (n,n)\leq n-1$. 
We choose  the first $n-1$ nodes in the first row as control nodes as illustrated in Fig. \ref{fig4}. It is clear  that the corresponding control matrix is $$\bar{B}=e^{(n)}_{1}\otimes[e^{(n)}_{1},\cdots,e^{(n)}_{n-1}].$$ 
 By Lemma \ref{lemma8}, for any eigenvalue of $L(n,n)$ with multiplicity $r$, there exist two sequences  $\{\alpha_{i}, 0\leq \alpha_i\leq n-1\}_{1\leq i\leq r}$ and $\{\beta_{i}, 0\leq \beta_i\leq n-1\}_{1\leq i\leq r}$ such that
\begin{equation}\nonumber
    \lambda_{\alpha_{1},\beta_{1}}=\lambda_{\alpha_{2},\beta_{2}}=\cdots=\lambda_{\alpha_{r},\beta_{r}},
\end{equation}
where  $\lambda_{\alpha,\beta}$ is defined in Lemma \ref{lemma8}, and  $\alpha_{i}\neq \alpha_{j}$ and $\beta_{i}\neq \beta_{j}$ for any $1\leq i<j\leq r$. By Corollary \ref{corollary2}, we have $r\leq n-1$. Denote any one of the eigenvectors corresponding to the above eigenvalue as $u$, then there  exists a vector $g=[g_1,\cdots,g_r]^T\in \mathbf{R}^{r}\backslash\{0\}$, such that $u=\sum_{i=1}^{r}g_{i}(v_{\alpha_{i}}\otimes w_{\beta_{i}})$, where $v_{\alpha_i}$ and $w_{\beta_i}$ are defined in Lemma \ref{lemma8}. By PBH criterion, we need to show  $u^{T}\bar{B}\neq 0$ for any $g\neq 0$. 

By Lemma \ref{lemma4}, Remark \ref{remark1} and the fact that $w_{\beta,n}\neq 0$ for any $0\leq \beta \leq n-1$, we have $rank(W^{*}_{\beta,n})=n-1$, where $w_{\beta,n}$ is defined in (\ref{eq51}) and $W^{*}_{\beta,n}$ is defined in (\ref{eq24}). As $r\leq n-1$, we have
\begin{equation}
    \nonumber
    rank\left(\begin{bmatrix} w_{\beta_{1},1}&\cdots & w_{\beta_{1},n-1}\\\vdots& \ddots& \vdots \\  w_{\beta_{r},1}&\cdots & w_{\beta_{r},n-1}\end{bmatrix}\right) = r.
\end{equation}
Thus, we have
\begin{equation}\nonumber
\begin{aligned}
    &rank\left(\begin{bmatrix} (v_{\alpha_{1}}\otimes w_{\beta_{1}})^{T}\bar{B} \\ \vdots \\ (v_{\alpha_{r}}\otimes w_{\beta_{r}})^{T}\bar{B}\end{bmatrix}\right) \\
    &= rank\left(\begin{bmatrix} w_{\beta_{1},1}&\cdots & w_{\beta_{1},n-1}\\\vdots& \ddots& \vdots \\  w_{\beta_{r},1}&\cdots & w_{\beta_{r},n-1}\end{bmatrix}\right) = r.
\end{aligned}
\end{equation}
This indicates $u^{T}\bar{B}=\sum_{i=1}^{r}g_{i}(v_{\alpha_{i}}\otimes w_{\beta_{i}})^{T}\bar{B}\neq 0$ for any $g\neq 0$. Thus, system (\ref{eq8}) is controllable by choosing control nodes in the above manner. We complete the proof of the theorem. 
\end{proof}

In the following, we  consider the case of $m\neq n$.  Firstly, we introduce a lemma which gives the complete solution of the following trigonometric diophantine equation
\begin{equation}
    \label{eq12}
    \cos{(f_{1} \pi)}+\cos{(f_{2} \pi)}+\cos{(f_{3} \pi)}+\cos{(f_{4} \pi)}=0,
\end{equation}
in which all the variables $f_{j}(0\leq f_{j}\leq 1,1\leq j\leq 4)$ are rational. In the following lemma, we express a solution of (\ref{eq12}) by an unordered quadruple $\{ f_{1},f_{2},f_{3},f_{4}\}$.

\begin{lemma}[\cite{Wlodarski1969}]\label{lemma5}
    If rational numbers $f_{j}(0\leq f_{j}\leq 1,1\leq j\leq 4)$ satisfy the trigonometric diophantine equation (\ref{eq12}), then $\{ f_{1},f_{2},f_{3},f_{4}\}$ either belongs to the following infinite family (\ref{eq47}) or (\ref{eq48})
      \begin{align}
          & \left \{\gamma,\delta,\pi-\gamma,\pi-\delta \right \}\ \left(0\leq \gamma \leq \delta \leq \frac{\pi}{2} \right),\label{eq47}\\
          & \left \{\theta,\frac{2\pi}{3}-\theta,\frac{2\pi}{3}+\theta,\frac{\pi}{2} \right \}\ \left(0< \theta < \frac{\pi}{3} \right) \label{eq48}
      \end{align}
    or is one of the following quadruples
      \begin{equation}
          \label{eq49}
          \begin{aligned}
          & \left \{\frac{\pi}{3},\pi, \frac{\pi}{2},\frac{\pi}{3} \right \},\ \left \{\frac{2\pi}{3},0,\frac{\pi}{2},\frac{2\pi}{3} \right \}, \\
          & \left \{\frac{2\pi}{5},\frac{4\pi}{5}, \frac{\pi}{2},\frac{\pi}{3} \right \},\ \left \{\frac{3\pi}{5},\frac{\pi}{5},\frac{\pi}{2},\frac{2\pi}{3} \right \}, \\
          & \left \{\frac{\pi}{5},\frac{3\pi}{5},\pi,\frac{\pi}{3} \right \},\ \left \{\frac{4\pi}{5},\frac{2\pi}{5},0,\frac{2\pi}{3} \right \}, \\
          & \left \{\frac{2\pi}{5},\frac{7\pi}{15},\frac{13\pi}{15},\frac{\pi}{3} \right \},\ \left \{\frac{3\pi}{5},\frac{8\pi}{15},\frac{2\pi}{15},\frac{2\pi}{3} \right \}, \\
          & \left \{\frac{\pi}{15},\frac{4\pi}{5},\frac{11\pi}{15},\frac{\pi}{3} \right \},\ \left \{\frac{14\pi}{15},\frac{\pi}{5},\frac{4\pi}{15},\frac{2\pi}{3} \right \}, \\
          & \left \{\frac{2\pi}{7},\frac{4\pi}{7},\frac{6\pi}{7},\frac{\pi}{3} \right \},\ \left \{\frac{5\pi}{7},\frac{3\pi}{7},\frac{\pi}{7},\frac{2\pi}{3} \right \}. \\
          \end{aligned}
      \end{equation}
\end{lemma}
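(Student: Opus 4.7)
The plan is to transform the trigonometric diophantine equation (\ref{eq12}) into a statement about vanishing sums of roots of unity, and then invoke the classification theorem of Conway and Jones (\cite{Conway1976}) on minimal vanishing sums of bounded length. Writing each rational $f_{j}=a_{j}/N$ over a common denominator $N$ and setting $\zeta=e^{\pi i/N}$, a primitive $2N$-th root of unity, we have $\cos(f_{j}\pi)=\tfrac{1}{2}(\zeta^{a_{j}}+\zeta^{-a_{j}})$, so (\ref{eq12}) is equivalent to the vanishing sum $\sum_{j=1}^{4}(\zeta^{a_{j}}+\zeta^{-a_{j}})=0$ of at most eight $2N$-th roots of unity.

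The next step is to invoke the Conway--Jones theorem, which asserts that every vanishing sum of roots of unity decomposes (up to multiplication by a common root of unity) into minimal vanishing sums, and which enumerates all minimal sums of length at most nine. The relevant building blocks here are: length-two pieces $\xi+(-\xi)=0$; the length-three piece $1+\omega+\omega^{2}=0$ coming from the cube roots of unity (and its rotates); and a short finite list of sporadic minimal sums of lengths five, six, seven and eight living in the cyclotomic fields $\mathbf{Q}(\zeta_{5})$, $\mathbf{Q}(\zeta_{7})$, $\mathbf{Q}(\zeta_{15})$ and $\mathbf{Q}(\zeta_{21})$.

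I would then enumerate how our length-eight symmetric sum can decompose into these pieces and back-translate each decomposition into angle quadruples. A decomposition into four length-two pieces forces the multiset $\{a_{j}\}$ to partition into two pairs with $a_{j}+a_{j'}\equiv N\pmod{2N}$, equivalently $f_{j}+f_{j'}=1$; since $\cos((1-x)\pi)=-\cos(x\pi)$, this produces exactly the infinite family (\ref{eq47}). A decomposition using one length-two piece together with two length-three pieces forces one $f_{j}=1/2$ and constrains the remaining three angles, after a rotation by $e^{i\theta}$, to be $\theta, 2\pi/3-\theta, 2\pi/3+\theta$, yielding family (\ref{eq48}); the verification uses the identity $\cos\theta+\cos(2\pi/3-\theta)+\cos(2\pi/3+\theta)=0$. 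Any remaining decomposition must use at least one length-five or length-seven sporadic minimal piece, and running through the Conway--Jones list of sporadic pieces yields exactly the twelve quadruples of (\ref{eq49}).

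The main obstacle is the sporadic case analysis. Once a length-five or length-seven minimal summand is fixed, the remaining roots of unity must still be symmetric (closed under $z\mapsto z^{-1}$ because they come from cosines) and must sum to zero; these two requirements together impose severe arithmetic constraints inside the cyclotomic field generated by the summand. Checking that precisely the twelve quadruples listed in (\ref{eq49}) satisfy these constraints, with no omissions and no duplicates among the quadruples already covered by (\ref{eq47}) or (\ref{eq48}), is the technical heart of Wlodarski's original proof (\cite{Wlodarski1969}), and is where the combinatorial bookkeeping concentrates.
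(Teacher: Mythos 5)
The paper contains no proof of this lemma: it is quoted directly from \cite{Wlodarski1969} (resting ultimately on the Conway--Jones classification of vanishing sums of roots of unity), so there is no in-paper argument to compare yours against. Your reduction is the correct and standard one --- writing $f_{j}=a_{j}/N$, passing to the vanishing sum $\sum_{j=1}^{4}(\zeta^{a_{j}}+\zeta^{-a_{j}})=0$ of eight $2N$-th roots of unity, and decomposing into minimal vanishing sums. Your derivation of family (\ref{eq47}) from the all-pairs decomposition is right, and the $2+3+3$ decomposition does give (\ref{eq48}); but you should justify why the length-two piece there must be a conjugate pair $\{\zeta^{a_{j}},\zeta^{-a_{j}}\}$ (forcing $f_{j}=1/2$) rather than a cross pair $\{\zeta^{a_{j}},\zeta^{-a_{j'}}\}$ with $j\neq j'$: in the cross case conjugation symmetry makes the complementary pair vanish as well, giving $f_{j}+f_{j'}=1$ and $f_{k}+f_{k'}=1$, so the solution falls back into (\ref{eq47}) and produces nothing new. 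Also, a length-seven minimal piece cannot occur in an eight-term sum (it would leave a single root of unity equal to zero), so your inventory of admissible decompositions needs correcting: the sporadic pieces that can appear have lengths $5$, $6$ or $8$.

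The genuine gap is the final step. Everything in the lemma beyond the two infinite families is the assertion that the sporadic decompositions yield exactly the twelve quadruples of (\ref{eq49}), and your proposal states this as the outcome of ``running through the Conway--Jones list'' without running through it. That enumeration --- fixing a minimal summand from the Conway--Jones list, imposing closure of the eight roots under $z\mapsto z^{-1}$, requiring the complementary roots to decompose as well, translating back to angle quadruples, and discarding those already covered by (\ref{eq47}) or (\ref{eq48}) --- is the proof, not an afterthought; without it the twelve quadruples are unverified and completeness of the list is unestablished. Since the paper itself simply cites \cite{Wlodarski1969} here, the honest options are to carry out that case analysis in full or to do as the paper does and import the result with a citation rather than present a proof sketch.
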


Denote $L(m,n)$ as the Laplacian matrix of the grid graph $\mathcal{L}_{m,n}$. By  Lemma \ref{lemma5}, we can obtain the largest multiplicity of eigenvalues of $L(m,n)$ for general $m$ and $n$.

\begin{corollary}\label{corollary3}
    Let $\psi(m,n)$ denote the largest multiplicity of eigenvalues of $L(m,n)$, and $\mathbf{d}$ denote the greatest common divisor of $m$ and $n$. Then we  have  
    \begin{itemize}
        \item for $\mathbf{d}\geq 4$, $\psi(m,n)=\mathbf{d}-1$;
        \item for $\mathbf{d}=3$,
            \begin{equation}\nonumber
            \psi(m,n)=\begin{cases}
            3,&\mbox{if $2|m$ or $2|n$;}\\
            2,&\mbox{otherwise;}
            \end{cases}
            \end{equation}
        \item for $\mathbf{d}=2$, $\psi(m,n)=2$;
        \item for $\mathbf{d}=1$, 
           \begin{equation}\nonumber
            \begin{aligned}
            &\psi(m,n)=\begin{cases}
            2,&\mbox{if $(2|m,3|n)$ or $(3|m,2|n)$}\\
              &\mbox{or $(3|m,5|n)$ or $(5|m,3|n)$;}\\
            1,&\mbox{otherwise;}
            \end{cases}
            \end{aligned}
            \end{equation}
    \end{itemize}
    where $p|q$ denotes that $p$ divides $q$ exactly, and $(2|m,3|n)$ means that these two conditions $2|m$ and $3|n$ are satisfied simultaneously.
\end{corollary}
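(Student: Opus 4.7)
The plan is to reduce multiplicities of $\lambda_{\alpha\beta}$ to solutions of the trigonometric diophantine equation (\ref{eq12}) supplied by Lemma \ref{lemma5}, and then to carry out a case analysis indexed by $\mathbf{d}=\gcd(m,n)$.

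First, by Lemma \ref{lemma8} the coincidence $\lambda_{\alpha_1\beta_1}=\lambda_{\alpha_2\beta_2}$ is equivalent to
\[
\cos\tfrac{\alpha_1\pi}{m}-\cos\tfrac{\alpha_2\pi}{m}+\cos\tfrac{\beta_1\pi}{n}-\cos\tfrac{\beta_2\pi}{n}=0.
\]
Applying $-\cos x=\cos(\pi-x)$ recasts this as an instance of (\ref{eq12}) with $f_1=\alpha_1/m$, $f_2=(m-\alpha_2)/m$, $f_3=\beta_1/n$, $f_4=(n-\beta_2)/n$, so by Lemma \ref{lemma5} the quadruple of angles $\{f_1\pi,f_2\pi,f_3\pi,f_4\pi\}$ must lie in the family (\ref{eq47}), in the family (\ref{eq48}), or in the sporadic list (\ref{eq49}).

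Second, write $m=\mathbf{d}a$, $n=\mathbf{d}b$ with $\gcd(a,b)=1$. Family (\ref{eq47}) splits the four angles into two pairs summing to $\pi$, and enumerating the three possible pairings yields, besides the trivial case $(\alpha_1,\beta_1)=(\alpha_2,\beta_2)$, an ``anti-diagonal'' source of coincidences $\alpha_i/m+\beta_i/n=1$, which forces $\lambda_{\alpha_i\beta_i}=4$ and, by an elementary integer count, produces exactly the $\mathbf{d}-1$ distinct indices $(ia,(\mathbf{d}-i)b)$ for $i=1,\ldots,\mathbf{d}-1$, together with a ``coordinate swap'' source pairing $(ka,k'b)$ with $(k'a,kb)$. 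In particular $\lambda=4$ attains multiplicity at least $\mathbf{d}-1$, giving the lower bound $\psi(m,n)\geq\mathbf{d}-1$.

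The core work is the matching upper bound together with the fine structure for small $\mathbf{d}$. For $\mathbf{d}\geq 4$ I would show that coincidences arising from family (\ref{eq48}) or from the twelve sporadic quadruples in (\ref{eq49}) cannot push multiplicity beyond $\mathbf{d}-1$: each such quadruple forces one of $\pi/2,\pi/3,2\pi/3,\pi/5,2\pi/5,\pi/7,\pi/15$ to be an admissible angle, which requires $2$, $3$, $5$, $7$, or $15$ to divide $m$ or $n$, and a direct combinatorial tally shows the resulting extra coincidences never accumulate past $\mathbf{d}-1$. For $\mathbf{d}=3$ the anti-diagonal already gives multiplicity $2$, and the sporadic quadruple $\{0,\pi/2,2\pi/3,2\pi/3\}$ from (\ref{eq49}) becomes realizable precisely when $2\mid m$ or $2\mid n$ (supplying the required $\pi/2$), producing a third coincident index and hence $\psi=3$; otherwise $\psi=2$. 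For $\mathbf{d}=2$, the anti-diagonal and swap together give $\psi=2$, and no quadruple from (\ref{eq48}) or (\ref{eq49}) can push it higher.

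Finally, the case $\mathbf{d}=1$ requires the finest bookkeeping: family (\ref{eq47}) now contributes nothing nontrivial, so every coincidence must come from (\ref{eq48}) or (\ref{eq49}). Matching the angles of each exceptional quadruple to $\alpha\pi/m$, $(m-\alpha)\pi/m$, $\beta\pi/n$, or $(n-\beta)\pi/n$ translates directly into a divisibility pattern on $(m,n)$; a systematic check shows that family (\ref{eq48}) fires exactly when $(2\mid m, 3\mid n)$ or $(3\mid m, 2\mid n)$, while the sporadic quadruples (for example $\{\pi/5,3\pi/5,\pi,\pi/3\}$) fire exactly in the additional patterns $(3\mid m, 5\mid n)$ or $(5\mid m, 3\mid n)$, with the resulting multiplicity capped at $2$. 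The principal obstacle of the proof is exactly this enumeration: verifying for each of the twelve sporadic quadruples and for the parametric family (\ref{eq48}) which divisibility configurations of $(m,n)$ allow all four required angles to appear simultaneously, and then combining the results to obtain the clean formula of the corollary.
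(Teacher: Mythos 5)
Your proposal follows essentially the same route as the paper: reduce eigenvalue coincidences to the trigonometric diophantine equation via $-\cos x=\cos(\pi-x)$, invoke Lemma \ref{lemma5}, extract the lower bound $\psi(m,n)\geq\mathbf{d}-1$ from the anti-diagonal pairing in family (\ref{eq47}), and then settle each case of $\mathbf{d}$ by checking which quadruples from (\ref{eq48}) and (\ref{eq49}) are realizable under the divisibility constraints on $m$ and $n$. The level of detail you leave to "systematic checking" matches what the paper itself leaves implicit, so the proposal is correct and not meaningfully different.
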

\begin{proof}
By Lemma \ref{lemma8}, we need to solve the following trigonometric diophantine equation in order to find all the multiple eigenvalues of $L(m,n)$,
\begin{equation}
    \label{eq13}
    \cos{\frac{\alpha_{1}\pi}{m}}+\cos{\frac{\beta_{1}\pi}{n}}=\cos{\frac{\alpha_{2}\pi}{m}}+\cos{\frac{\beta_{2}\pi}{n}},
\end{equation}
where $\alpha_i(i=1,2)\in \{0,1,\cdots,m-1\}$ and $\beta_i(i=1,2)\in \{0,1,\cdots,n-1\}$. The solutions of (\ref{eq13}) are given by (\ref{eq47})(\ref{eq48}) and (\ref{eq49}) in Lemma \ref{lemma5}.

We further discuss the solutions given by (\ref{eq47}), that is,
\begin{equation}
    \label{eq52}
    \left \{\frac{\alpha_{1}\pi}{m},\frac{\beta_{1}\pi}{n},\pi-\frac{\alpha_{2}\pi}{m},\pi-\frac{\beta_{2}\pi}{n} \right \} = \left \{\gamma,\delta,\pi-\gamma,\pi-\delta \right \},
\end{equation}
where $0\leq \gamma \leq \delta \leq \frac{\pi}{2}$. The equation (\ref{eq52}) indicates that in the unordered quadruple $\left \{\frac{\alpha_{1}\pi}{m},\frac{\beta_{1}\pi}{n},\pi-\frac{\alpha_{2}\pi}{m},\pi-\frac{\beta_{2}\pi}{n} \right \}$, there exist two terms whose sum is $\pi$ and sum of the other two terms is also $\pi$. If $\frac{\alpha_{1}\pi}{m}+\frac{\beta_{1}\pi}{n}=\pi$, then the solutions of (\ref{eq13}) are
\begin{equation}\label{eq27}
    \alpha_{i}=\frac{k_{i}m}{\mathbf{d}},\ \beta_{i}=\frac{(\mathbf{d}-k_{i})n}{\mathbf{d}}
\end{equation}
for $i=1,2$ and $1\leq k_{i}\leq \mathbf{d}-1$. If $\frac{\alpha_{1}\pi}{m}+(\pi-\frac{\alpha_{2}\pi}{m})=\pi,$ then we have $\alpha_1=\alpha_2$ and $\beta_1=\beta_2$. If $\frac{\alpha_{1}\pi}{m}+(\pi-\frac{\beta_{2}\pi}{n})=\pi,$ then the solutions of (\ref{eq13}) are
\begin{equation}\nonumber
    \alpha_{i}=\frac{k_{i}m}{\mathbf{d}},\ \beta_{3-i}=\frac{k_{i}n}{\mathbf{d}}
\end{equation}
for $i=1,2$ and $1\leq k_{i}\leq \mathbf{d}-1$. Solutions of (\ref{eq13}) given in (\ref{eq48}) and (\ref{eq49}) can be obtained similarly. By these solutions, we can obtain all the multiple eigenvalues and corresponding multiplicities.

 We see that the eigenvalue $4$ of $L(m,n)$ is of $\mathbf{d}-1$ multiplicities by (\ref{eq27}). Thus, 
 \begin{equation}
     \label{eq29}
     \psi(m,n)\geq \mathbf{d}-1.
 \end{equation}

\begin{itemize}
\item For $\mathbf{d}\geq 4$, if $L(m,n)$ has an eigenvalue with multiplicity $r(r\geq 4)$, then according to the basic property of exact division and the explicit expression of solutions of (\ref{eq13}) given in Lemma \ref{lemma5}, we can easily deduce that
\begin{equation}
    \label{eq30}
    \mathbf{d}-1 \geq r
\end{equation}
by checking all the eigenvalues whose multiplicity is not less than 4. Now, we prove $\psi(m,n)= \mathbf{d}-1$ by reduction to absurdity. Assume that there exist $m$ and $n$ such that $\psi(m,n)> \mathbf{d}-1\geq 3$, then by (\ref{eq30}) we have $\mathbf{d}-1\geq\psi(m,n)$. This is a contradiction. Thus, for $\mathbf{d}\geq 4$, $\psi(m,n)= \mathbf{d}-1$. 
\item For $\mathbf{d}=3$, by the explicit expression of solutions of (\ref{eq13}) given in Lemma \ref{lemma5}, we see that if $2|m$ (or $2|n$), then the only eigenvalue of $L(m,n)$ whose multiplicity is strictly larger than $\mathbf{d}-1$ has the characteristic parts
$
        \cos{0\pi}+\cos{\frac{2\pi}{3}}=\cos{\frac{2\pi}{3}}+\cos{0\pi}=\cos{\frac{\pi}{2}}+\cos{\frac{\pi}{3}}
$(or its equivalent form). Otherwise, multiplicities of all the eigenvalues of $L(m,n)$ are not more than $\mathbf{d}-1$. Thus, for $\mathbf{d}=3$, $\psi(m,n)=3$ if $2|m$ or $2|n$; otherwise, $\psi(m,n)=\mathbf{d}-1=2$. 
\item For $\mathbf{d}=2$, $L(m,n)$ must have an eigenvalue with multiplicity $2$, which has the following characteristic part 
$
        \cos{\frac{\pi}{2}}+\cos{0\pi}=\cos{0\pi}+\cos{\frac{\pi}{2}},
$
and have no eigenvalues whose multiplicities are strictly larger than 2. Thus, for $\mathbf{d}=2$, $\psi(m,n)=2$. 
\item For $\mathbf{d}=1$, by the expression of solutions of (\ref{eq13}) given in Lemma \ref{lemma5}, we know that $L(m,n)$ has no eigenvalues whose multiplicities are strictly larger than 2. Furthermore, if ($2|m$, $3|n$) or ($3|m$, $2|n$) or ($3|m$, $5|n$) or ($5|m$, $3|n$), the eigenvalues with multiplicity 2 have at least one of the following characteristic parts or their equivalent forms
\begin{equation}
    \nonumber
    \begin{aligned}
        & \cos{\frac{\pi}{2}}+\cos{\frac{\pi}{3}}=\cos{0\pi}+\cos{\frac{2\pi}{3}},\\
        & \cos{0\pi}+\cos{\frac{2\pi}{5}}=\cos{\frac{\pi}{3}}+\cos{\frac{\pi}{5}},\\
        & \cos{\frac{\pi}{2}}+\cos{\frac{\pi}{5}}=\cos{\frac{\pi}{3}}+\cos{\frac{2\pi}{5}}.
    \end{aligned}
\end{equation}
Thus, $\psi(m,n)=2$. Otherwise, $\psi(m,n)=1$.
\end{itemize}
This completes the proof. 
\end{proof}

Based on the above corollary, we establish our theorem about the minimal number of control nodes such that $(L(m,n),\bar{B})$ is controllable when $m\neq n$.

\begin{theorem}\label{theorem2}
    Denote $\varphi (m,n)$ as the minimal number of control nodes such that $(L(m,n),\bar{B})$ is controllable, that is,
    \begin{equation}\nonumber
        \begin{aligned}
            &\varphi (m,n)= \min \big \{\rho\in \mathbf{Z}^{+}| there\  exist\  i_{1},...,i_{\rho}, 1\leq \\ 
            &i_{1},...,i_{\rho}\leq mn,\ such \ that\ (L(m,n),[e^{(mn)}_{i_{1}},\cdots,e^{(mn)}_{i_{\rho}}])\\  
            &is\ controllable \big \}.
        \end{aligned}
    \end{equation}
    Then we have $\varphi (m,n) = \psi(m,n)$, where $\psi(m,n)$ is defined in Corollary \ref{corollary3}.
\end{theorem}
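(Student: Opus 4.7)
The lower bound $\varphi(m,n)\geq \psi(m,n)$ is immediate from Lemma~\ref{lemma2} and Corollary~\ref{corollary3}: $L(m,n)$ admits an eigenvalue of geometric multiplicity exactly $\psi(m,n)$, and any candidate $\bar B$ built from $\rho<\psi(m,n)$ standard basis columns has rank $\rho<\psi(m,n)$, so PBH fails at that eigenvalue. All the work therefore goes into exhibiting, for each $(m,n)$, a control node set of size $\psi(m,n)$ that makes $(L(m,n),\bar B)$ controllable.

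The plan is to imitate the proof of Theorem~\ref{theorem1}: place the control nodes along the first row of the grid, giving $\bar B = e^{(m)}_1\otimes[e^{(n)}_1,\ldots,e^{(n)}_{\psi(m,n)}]$ (and switching rows and columns should any edge case force $\psi(m,n)>n$), then verify PBH using Lemmas~\ref{lemma8} and~\ref{lemma4}. With this choice, for every eigenvector $v_\alpha\otimes w_\beta$ of $L(m,n)$,
\[
(v_\alpha\otimes w_\beta)^{T}\bar B = v_{\alpha,1}\,[w_{\beta,1},\ldots,w_{\beta,\psi(m,n)}].
\]
Since $v_{\alpha,1}=\sqrt{2/m}\cos(\alpha\pi/(2m))\neq 0$ for all $0\leq \alpha\leq m-1$, the PBH test reduces, for each eigenvalue $\lambda$ of multiplicity $r$, to showing that the $r\times \psi(m,n)$ submatrix of $W$ whose rows are the (distinct) second indices $\beta_1,\ldots,\beta_r$ of the eigenvectors spanning that eigenspace and whose columns are the first $\psi(m,n)$ columns has rank $r$. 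Simple eigenvalues are trivially distinguished because $w_{\beta,1}\neq 0$ for every $0\leq \beta\leq n-1$.

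The rank verification is then carried out case by case, following the four bullets of Corollary~\ref{corollary3}. In the generic subcase $\mathbf d\geq 4$, the extremal eigenspace (eigenvalue $4$, multiplicity $\mathbf d-1$) has $\beta_i=(\mathbf d-i)n/\mathbf d$ for $1\leq i\leq \mathbf d-1$, and the corresponding $(\mathbf d-1)\times(\mathbf d-1)$ submatrix equals, up to a nonzero diagonal rescaling, the DCT--II minor
\[
\bigl(\cos(i(2j-1)\pi/(2\mathbf d))\bigr)_{1\leq i,j\leq \mathbf d-1}.
\]
This is a minor of the full $\mathbf d\times\mathbf d$ DCT--II matrix, which is orthogonal; Lemma~\ref{lemma4} applied to it, deleting the constant row $i=0$ and the last column, yields a nonzero determinant. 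The subcases $\mathbf d\in\{1,2,3\}$ are treated analogously, separating further by whether the multiple eigenvalues arise from the arithmetic family~(\ref{eq27}) or from one of the sporadic quadruples listed in~(\ref{eq49}) of Lemma~\ref{lemma5}.

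The main obstacle is showing that the proposed node set simultaneously distinguishes \emph{every} eigenspace, not only the maximum-multiplicity one. A multiple eigenvalue of multiplicity $r<\psi(m,n)$ whose indices $\beta_i$ come from a sporadic quadruple in~(\ref{eq49}) may, a priori, have its first $\psi(m,n)$ $w$-coordinates collinear across rows; the clean DCT argument above does not directly cover such cases because the $\beta_i$'s are no longer evenly spaced multiples of $n/\mathbf d$. The resolution is that the list~(\ref{eq49}) is finite and involves denominators at most $15$, so one enumerates the resulting eigenvalues and checks invertibility of the relevant $2\times 2$ or $3\times 3$ cosine matrices by hand; in the rare cases where ``first row'' nodes fail, one replaces one of them by a first-column node and reruns the Lemma~\ref{lemma4} argument with the roles of $V$ and $W$ swapped. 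This bookkeeping, spread across the four bullets of Corollary~\ref{corollary3}, is the heart of the proof and is what promotes the necessary lower bound $\psi(m,n)$ to a sharp value of $\varphi(m,n)$.
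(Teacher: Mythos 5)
Your proposal follows essentially the same route as the paper: the lower bound via Lemma~\ref{lemma2} and Corollary~\ref{corollary3}, the upper bound by taking the first $\psi(m,n)$ nodes of the first row as controls, reducing PBH to the rank of cosine submatrices of $W$ (using $v_{\alpha,1}\neq 0$), handling the arithmetic family of multiple eigenvalues via Lemma~\ref{lemma4} applied to the orthogonal cosine matrix of order $\mathbf{d}$, and finitely enumerating the sporadic quadruples of Lemma~\ref{lemma5}. Your two hedges are unnecessary: the paper observes that $\psi(m,n)\leq\min(m,n)$ so the first row always has room, and it checks $\det(H)\neq 0$ directly for every sporadic eigenvalue (all of multiplicity at most $5$), so no first-column fallback is ever invoked.
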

\begin{proof}
By Lemma \ref{lemma2}, we see that $\varphi (m,n) \geq \psi(m,n)$. In the following, we prove $\varphi (m,n) \leq \psi(m,n)$.
By Corollary \ref{corollary3}, it is clear that  $\psi(m,n)\leq \min(m,n)$.  As illustrated in Fig.~\ref{fig4_1}, we choose the first $\psi(m,n)$ nodes in the first row as control nodes from left to right. The corresponding control matrix is $$\bar{B}=e^{(m)}_{1}\otimes[e^{(n)}_{1},\cdots,e^{(n)}_{\psi(m,n)}].$$ We next prove that $(L(m,n),\bar{B})$ is controllable. Similar to the proof of Theorem \ref{theorem1}, for any eigenvalue $\lambda$ of $L(m,n)$  and the corresponding eigenvector $u$, we just need to prove 
\begin{equation}
    \label{eq34}
    u^{T}\bar{B}\neq 0.
\end{equation}

By the explicit expression of eigenvectors in Lemma \ref{lemma8}, it is clear that (\ref{eq34}) holds for any eigenvectors $u$ corresponding to single eigenvalues. In the following, we will discuss (\ref{eq34}) for multiple eigenvalues by using  Lemma \ref{lemma8} and Corollary \ref{corollary3}. 

We introduce a matrix $Y=[y_{i,j}]_{\mathbf{d}\times \mathbf{d}}$ whose rows are $\mathbf{d}$ eigenvectors of the Laplacian matrix of $\mathcal{P}_{\mathbf{d}}$, where $\mathbf{d}$ is the greatest common divisor of $m$ and $n$ and
\begin{equation}
  \nonumber
     y_{i,j}=\begin{cases}
    \frac{1}{\sqrt{\mathbf{d}}},&\mbox{if $i = 1$;}\\
    \sqrt{\frac{2}{\mathbf{d}}}\cos{\frac{(i-1)(2j-1)\pi}{2\mathbf{d}}},&\mbox{otherwise.}
    \end{cases}
\end{equation}
Obviously, $Y$ is an orthogonal matrix. By Lemma \ref{lemma4} and the fact that $y_{i,\mathbf{d}}\neq 0$ for any $1\leq i\leq \mathbf{d}$, we have 
\begin{equation}
    \label{eq31}
    rank(Y^{*}_{i,\mathbf{d}})=\mathbf{d}-1,
\end{equation}
where $Y^{*}_{i,\mathbf{d}}$ is defined in (\ref{eq24}).

For eigenvalues with the following characteristic parts
\begin{equation}
    \nonumber
    \cos{\frac{\alpha_{1}\pi}{m}}+\cos{\frac{\beta_{1}\pi}{n}}=\cdots=\cos{\frac{\alpha_{r}\pi}{m}}+\cos{\frac{\beta_{r}\pi}{n}},
\end{equation}
where $r (2\leq r\leq \mathbf{d}-1)$ is the multiplicity of eigenvalues and $l_i\triangleq\frac{\mathbf{d}}{n} \beta_i $ are integers, we then have by (\ref{eq29}) and (\ref{eq31}) 
\begin{align}
    &rank\begin{bmatrix} (v_{\alpha_{1}}\otimes w_{\beta_{1}})^{T}\bar{B} \\ \vdots \\ (v_{\alpha_{r}}\otimes w_{\beta_{r}})^{T}\bar{B}\end{bmatrix} \nonumber \\
    \geq \  &rank \begin{bmatrix} y_{l_{1}+1,1}&\cdots & y_{l_{1}+1,d-1}\\\vdots& \ddots& \vdots \\  y_{l_{r}+1,1}&\cdots & y_{l_{r}+1,\mathbf{d}-1}\end{bmatrix} \nonumber \\
    =\ & r, \label{eq33}
\end{align}

where $\{v_{\alpha_{i}}\otimes w_{\beta_{i}}, 1\leq i\leq r\}$ is a basis for the eigenvector space corresponding to the above eigenvalue. Thus, the eigenvectors $u$ can be expressed into the following form, \begin{equation}
    \label{eq32}
    u=\sum_{i=1}^{r}g_{i}(v_{\alpha_{i}}\otimes w_{\beta_{i}}),
\end{equation} where $[g_{1}, g_{2}, \cdots, g_r]^{T}$ is a nonzero vector. 
For eigenvectors  (\ref{eq32}),  we have by (\ref{eq33}),
$$u^{T}\bar{B}=\sum_{i=1}^{r}g_{i}(v_{\alpha_{i}}\otimes w_{\beta_{i}})^{T}\bar{B}\neq 0.$$

For other multiple eigenvalues, we can list them one by one according to the explicit expression of solutions of (\ref{eq13}) given in Lemma \ref{lemma5}. We see that the mulitiplicities of these eigenvalues are not more than $5$. By directly calculating the determinant of the following matrix whose dimension is not more than $5$,
\begin{equation}
    \label{eq15}
    H = \begin{bmatrix} w_{\beta_{1},1}&\cdots & w_{\beta_{1},r}\\\vdots& \ddots& \vdots \\  w_{\beta_{r},1}&\cdots & w_{\beta_{r},r}\end{bmatrix},
\end{equation}
where $r (2\leq r\leq 5)$ is multiplicity of the eigenvalue and $w_{\beta,l}$ is defined in Lemma \ref{lemma8}, we have $det (H)\neq 0$. Thus, we can obtain (\ref{eq33}), which indicates that $u^{T}\bar{B}\neq 0$. This completes the proof of the theorem. 
\end{proof}

For given $m$ and $n$,  Theorem \ref{theorem2} provides the minimal number of control nodes to guarantee the controllability of  $(L(m,n),\bar{B})$. The results in Theorem \ref{theorem1} can  also be contained in Theorem \ref{theorem2}.

\begin{figure}
\begin{center}
\includegraphics[width=5.0cm]{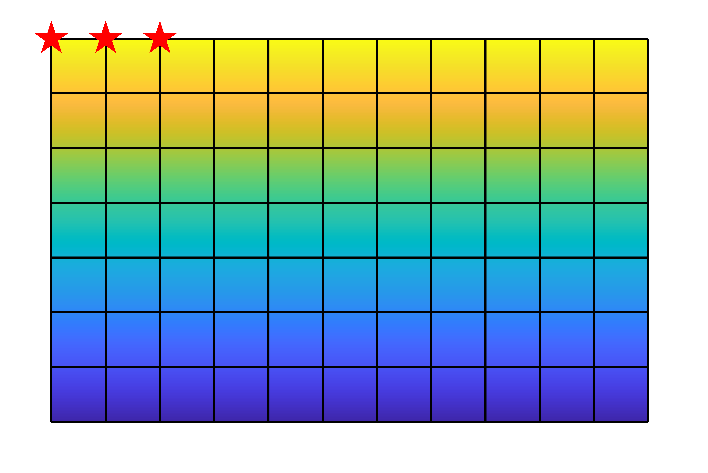} 
  \caption{This figure illustrates  how to choose control nodes on the grid graph $\mathcal{L}_{m,n}$ such that $(L(m,n),\bar{B})$ is controllable for $m\neq n$, where the control nodes are marked by red stars.}
  \label{fig4_1}                            
\end{center}                                 
\end{figure}

Generally speaking, it is difficult to give all the control node sets such that  $(L(m,n),\bar{B})$ is controllable for  given $m$ and $n$. However, for some special cases, we can solve this problem completely by using our results. 

\begin{itemize}
\item For the case $\psi(m,n)=1$, Notarstefano and Parlangeli  in \cite{Notarstefano2013} gave all and only the combinations of control nodes for $(L(m,n),\bar{B})$ to be controllable, though they did not show under what conditions $\psi(m,n)=1$ holds. Our Corollary \ref{corollary3} gives the answer for this question. Thus, for the case $\psi(m,n)=1$, the problem of finding all the control node sets for the pair $(L(m,n),\bar{B})$ to be controllable is thoroughly solved.

\item Another case where this problem can be solved thoroughly is $\mathbf{d}=1$, $2|m$, $3|n$ and $5\nmid m$ so that $\varphi(m,n)=\psi(m,n)=2$ (see Corollary \ref{corollary3} and Theorem \ref{theorem2}), where $5\nmid m$ means that $5$ is not a factor of $m$. For this case, $L(m,n)$ has only one multiple eigenvalue whose multiplicity is 2. We can find all the control node combinations such that (\ref{eq33}) holds by calculating the determinant of $2\times 2$ square matrices. Thus, all the control node sets such that $(L(m,n),\bar{B})$ is controllable can be obtained.
\end{itemize} 

\subsection{Controllability of system (\ref{eq8}) on cylinder grid graphs}

In Subsection \ref{subsection4.1}, we investigate controllability of  system (\ref{eq8}) on grid graphs. In fact, the methods in the above subsection can also be extended to study controllability of system (\ref{eq8}) on $m\times n$ cylinder grid graphs $\mathcal{C}_{m,n}$. The Laplacian matrix of $\mathcal{C}_{m,n}$ is denoted as $L_{\mathcal{C}}(m,n)$.

By Subsection \ref{subsection2.1} , we know that a cylinder grid graph $\mathcal{C}_{m,n}$ is the Cartesian product of a cycle graph $\mathcal{D}_{m}$ and a path graph $\mathcal{P}_{n}$. In order to provide properties of eigenvalues and eigenvectors of $L_{\mathcal{C}}(m,n)$, we introduce a lemma which gives the explicit expression of eigenvalues and eigenvectors of the Laplacian matrix $L_{\mathcal{D}}(m)$ of the cycle graph $\mathcal{D}_{m}$. 

\begin{lemma}[\cite{Godsil2001}]\label{lemma6}
  The eigenvalues $\lambda^{(\mathcal{D})}_{1},\cdots,\lambda^{(\mathcal{D})}_{m}$ of $L_{\mathcal{D}}(m)$ are given by
\begin{equation}
  \label{eq36}
  \lambda^{(\mathcal{D})}_{\alpha}=2-2\cos{\frac{2 \alpha \pi}{m}},\   \alpha =1,2,\cdots,m,
\end{equation}
and the corresponding eigenvectors $\bar{v}_{ \alpha }=[\bar{v}_{ \alpha ,1},\cdots,\bar{v}_{ \alpha ,m}]^{T}$ are given by
\begin{equation}
  \nonumber
  \bar{v}_{ \alpha ,j}=\mathbf{e}^{\mathbf{i}\frac{2 \alpha j\pi}{m}},\  j =1,2,\cdots,m,
\end{equation}
where $\mathbf{i}$ is the unit of imaginary number and $\mathbf{e}$ is the base of natural logarithm.
\end{lemma}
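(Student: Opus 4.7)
The plan is to verify the claim by direct substitution, exploiting the fact that $L_{\mathcal{D}}(m)$ is a circulant matrix. First I would write down $L_{\mathcal{D}}(m)$ explicitly: its diagonal entries are $2$, its entries in positions $(j,j-1)$ and $(j,j+1)$ (indices taken modulo $m$) are $-1$, and all other entries are $0$. Since the cycle is a Cayley graph of the cyclic group $\mathbf{Z}/m\mathbf{Z}$, one expects the characters $j\mapsto \mathbf{e}^{\mathbf{i}\frac{2\alpha j\pi}{m}}$ to diagonalize $L_{\mathcal{D}}(m)$, and the proof just makes this explicit.

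The key computation is the following. Fix $\alpha\in\{1,\dots,m\}$ and compute the $j$-th component of $L_{\mathcal{D}}(m)\bar{v}_{\alpha}$:
\begin{equation}\nonumber
(L_{\mathcal{D}}(m)\bar{v}_{\alpha})_j = 2\bar{v}_{\alpha,j} - \bar{v}_{\alpha,j-1} - \bar{v}_{\alpha,j+1},
\end{equation}
with the indices $j-1$ and $j+1$ read modulo $m$ so that the cyclic edge $(\zeta_m,\zeta_1)$ is accounted for. Substituting $\bar{v}_{\alpha,j} = \mathbf{e}^{\mathbf{i}\frac{2\alpha j\pi}{m}}$ and factoring out this common phase gives
\begin{equation}\nonumber
(L_{\mathcal{D}}(m)\bar{v}_{\alpha})_j = \bar{v}_{\alpha,j}\Bigl(2 - \mathbf{e}^{-\mathbf{i}\frac{2\alpha\pi}{m}} - \mathbf{e}^{\mathbf{i}\frac{2\alpha\pi}{m}}\Bigr) = \Bigl(2-2\cos\frac{2\alpha\pi}{m}\Bigr)\bar{v}_{\alpha,j},
\end{equation}
which is exactly the claimed eigenvalue--eigenvector relation with $\lambda^{(\mathcal{D})}_{\alpha}=2-2\cos\frac{2\alpha\pi}{m}$.

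Finally, I would argue that this list is complete: the vectors $\bar{v}_1,\dots,\bar{v}_m$ are the columns of (a rescaling of) the $m\times m$ discrete Fourier matrix, which is nonsingular, so they form a basis of $\mathbf{C}^m$. Consequently the $m$ eigenvalues listed in (\ref{eq36}) exhaust the spectrum of $L_{\mathcal{D}}(m)$, counted with algebraic multiplicity. There is no real obstacle here; the only point to be slightly careful about is handling the wrap-around indices $j=1$ and $j=m$ correctly so that the cyclic edge contributes the $-\bar{v}_{\alpha,m}$ (resp.\ $-\bar{v}_{\alpha,1}$) term, which is automatic once the exponent $\mathbf{i}\frac{2\alpha j\pi}{m}$ is interpreted modulo $2\pi$.
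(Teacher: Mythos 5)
Your verification is correct: the circulant structure of $L_{\mathcal{D}}(m)$ makes the characters $j\mapsto \mathbf{e}^{\mathbf{i}\frac{2\alpha j\pi}{m}}$ eigenvectors with eigenvalues $2-2\cos\frac{2\alpha\pi}{m}$, and the nonsingularity of the discrete Fourier matrix gives completeness of the spectrum. The paper states this lemma as a cited result from Godsil and Royle without proof, and your direct computation is exactly the standard argument behind that citation.
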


Based on the above lemma, we have the following results about eigenvalues and eigenvectors of the Laplacian matrix $L_{\mathcal{C}}(m,n)$.

\begin{lemma}\label{lemma9}
   The Laplacian matrix $L_{\mathcal{C}}(m,n)$ has the following $mn$ eigenvalues,
    \begin{equation}\label{eq45}
         \hat{ \lambda}_{ \alpha  \beta }= 4-2(\cos{\frac{2 \alpha \pi}{m}}+\cos{\frac{ \beta \pi}{n}}), 1\leq  \alpha \leq m, 0\leq  \beta \leq n-1.
    \end{equation}
    The eigenvector corresponding to $\hat{ \lambda}_{ \alpha  \beta }$ is $\bar{v}_{ \alpha }\otimes w_{ \beta }$, where $\bar{v}_{ \alpha }=[\bar{v}_{ \alpha ,1},\cdots,\bar{v}_{ \alpha ,m}]^{T}$ with $\bar{v}_{ \alpha ,k}=\mathbf{e}^{\mathbf{i}\frac{2 \alpha k\pi}{m}}$ for $1\leq k\leq m$ and $w_{ \beta }=[w_{ \beta ,1},\cdots,w_{ \beta ,n}]^{T}$ with $w_{ \beta ,l}$ defined in (\ref{eq51}).
\end{lemma}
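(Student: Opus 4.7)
The plan is to derive Lemma 9 as an essentially direct consequence of the Kronecker-sum structure of the Laplacian of a Cartesian product of graphs, combined with the spectral descriptions of the cycle and path Laplacians already recorded as Lemma 6 and Lemma 3.

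First, I would recall from Subsection 2.1 that $\mathcal{C}_{m,n} = \mathcal{D}_m \,\Box\, \mathcal{P}_n$, so by the identity stated there,
\begin{equation}\nonumber
L_{\mathcal{C}}(m,n) \;=\; L_{\mathcal{D}}(m) \oplus L_{\mathcal{P}}(n) \;=\; L_{\mathcal{D}}(m)\otimes I_n + I_m\otimes L_{\mathcal{P}}(n).
\end{equation}
Next, I would invoke the Kronecker-sum eigenvalue/eigenvector rule given in the preliminaries: for any right eigenpair $(\lambda^{(\mathcal{D})}_\alpha, \bar v_\alpha)$ of $L_{\mathcal{D}}(m)$ and any right eigenpair $(\lambda^{(\mathcal{P})}_\beta, w_\beta)$ of $L_{\mathcal{P}}(n)$, the pair
\begin{equation}\nonumber
\bigl(\lambda^{(\mathcal{D})}_\alpha + \lambda^{(\mathcal{P})}_\beta,\; \bar v_\alpha \otimes w_\beta\bigr)
\end{equation}
is a right eigenpair of $L_{\mathcal{C}}(m,n)$. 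Substituting the explicit values $\lambda^{(\mathcal{D})}_\alpha = 2 - 2\cos\frac{2\alpha\pi}{m}$ (Lemma 6) and $\lambda^{(\mathcal{P})}_\beta = 2 - 2\cos\frac{\beta\pi}{n}$ (Lemma 3) yields exactly the formula (\ref{eq45}) for $\hat\lambda_{\alpha\beta}$, and the stated eigenvectors $\bar v_\alpha\otimes w_\beta$ with the components written in Lemma 9.

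Finally, I would verify completeness by a dimension count: $\alpha$ ranges over $\{1,\dots,m\}$ and $\beta$ over $\{0,\dots,n-1\}$, giving $mn$ pairs, which equals the size of $L_{\mathcal{C}}(m,n)$. Since $\{\bar v_\alpha\}_{\alpha=1}^m$ is a basis of $\mathbf{C}^m$ (the $\bar v_\alpha$ are the columns of a DFT matrix) and $\{w_\beta\}_{\beta=0}^{n-1}$ is a basis of $\mathbf{R}^n$ (orthogonal, as noted in Remark 1), the family $\{\bar v_\alpha \otimes w_\beta\}$ forms a basis of $\mathbf{C}^{mn}$, so all eigenvalues of $L_{\mathcal{C}}(m,n)$ (with multiplicity) are enumerated by (\ref{eq45}) and Lemma 9 is proved.

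There is essentially no obstacle here; the proof is a bookkeeping exercise. The only subtlety worth a line of comment is that the cycle-graph eigenvectors from Lemma 6 are complex, so the spectral decomposition is stated over $\mathbf{C}^{mn}$ rather than $\mathbf{R}^{mn}$; this causes no difficulty because PBH-type controllability criteria used in the sequel operate over $\mathbf{C}$ anyway.
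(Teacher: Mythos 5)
Your proof is correct and follows essentially the same route as the paper, which states Lemma~\ref{lemma9} as an immediate consequence of the Kronecker-sum identity $L_{\mathcal{C}}(m,n)=L_{\mathcal{D}}(m)\oplus L_{\mathcal{P}}(n)$ from Subsection~\ref{subsection2.1} together with the explicit eigenpairs in Lemmas~\ref{lemma6} and~\ref{lemma3}. Your added dimension count and the remark about the complex cycle eigenvectors are sound and, if anything, make the completeness claim more explicit than the paper does.
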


Parallel to Corollary \ref{corollary3}, we have the following result for the largest multiplicity of eigenvalues of $L_{\mathcal{C}}(m,n)$ by using Lemma \ref{lemma5}.

\begin{corollary}\label{corollary5}
Let $\psi_{\mathcal{C}}(m,n)(m\geq 3,n\geq 2)$ denote the largest multiplicity of  eigenvalues of $L_{\mathcal{C}}(m,n)$, and $\mathbf{d}_{\mathcal{C}}$ denote the greatest common divisor of $m$ and $2n$. We have  
    \begin{itemize}
        \item for $\mathbf{d}_{\mathcal{C}}\geq 6$, $\psi_{\mathcal{C}}(m,n)=\mathbf{d}_{\mathcal{C}}-1$;
        \item for $\mathbf{d}_{\mathcal{C}}=5$, 
        \begin{equation}\nonumber
            \psi_{\mathcal{C}}(m,n)=\begin{cases}
            6,&\mbox{if $(15|m,10|n)$;}\\
            4&\mbox{otherwise;}
            \end{cases}
            \end{equation}
        \item for $\mathbf{d}_{\mathcal{C}}=4$, 
         \begin{equation}\nonumber
         \begin{aligned}
           &\psi_{\mathcal{C}}(m,n)\\
           &=\begin{cases}
            4,&\mbox{if ($10|m$, $3|n$) or ($12|m$, $5|n$) or ($6|m$, $2|n$);}\\
            3&\mbox{otherwise;}
            \end{cases}
            \end{aligned}
            \end{equation}
        \item for $\mathbf{d}_{\mathcal{C}}=3$, $\psi_{\mathcal{C}}(m,n)=3$;
        \item for $\mathbf{d}_{\mathcal{C}}=2$, 
            \begin{equation}\nonumber
            \begin{aligned}
            &\psi_{\mathcal{C}}(m,n)\\
            &=\begin{cases}
            4,&\mbox{if $(10|m,3|n)$ or $(12|m,5|n)$ or $(6|m,2|n)$;}\\
            3,&\mbox{otherwise if $(6|m,5|n)$ or $(4|m,3|n)$;}\\
            2,&\mbox{otherwise;}
            \end{cases}
            \end{aligned}
            \end{equation}
        \item for $\mathbf{d}_{\mathcal{C}}=1$, $\psi_{\mathcal{C}}(m,n)=2$.
    \end{itemize}
\end{corollary}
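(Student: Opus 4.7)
The plan is to mirror the argument of Corollary \ref{corollary3}, adapted to the cylinder grid by replacing the first coordinate with $\frac{2\alpha}{m}$ in accordance with Lemma \ref{lemma9}. Two eigenvalues of $L_{\mathcal{C}}(m,n)$ coincide precisely when
$$\cos\frac{2\alpha_1\pi}{m}+\cos\frac{\beta_1\pi}{n}=\cos\frac{2\alpha_2\pi}{m}+\cos\frac{\beta_2\pi}{n},$$
and substituting $-\cos x=\cos(\pi-x)$ casts this as a trigonometric diophantine equation of the form covered by Lemma \ref{lemma5}. So my first step is to reduce the multiplicity problem to enumerating the solutions listed in (\ref{eq47})--(\ref{eq49}) under the side constraints $1\le\alpha_i\le m$ and $0\le\beta_i\le n-1$.

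For the lower bound I would focus on the eigenvalue $4$, which corresponds to $\cos\frac{2\alpha\pi}{m}+\cos\frac{\beta\pi}{n}=0$. Using the sum-to-product identity, this is equivalent to $\frac{2\alpha}{m}\pm\frac{\beta}{n}\in\{1,-1,3\}$, and a short divisibility count (taking care to remove the double-covered pair $\frac{2\alpha}{m}=\frac{\beta}{n}=\ldots$) shows that exactly $\mathbf{d}_{\mathcal{C}}-1$ distinct admissible pairs $(\alpha,\beta)$ exist, where $\mathbf{d}_{\mathcal{C}}=\gcd(m,2n)$. The extra factor of $2$ in the gcd is precisely what distinguishes this count from the analogous grid-graph computation in (\ref{eq27}) and provides the universal bound $\psi_{\mathcal{C}}(m,n)\ge\mathbf{d}_{\mathcal{C}}-1$.

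The matching upper bound comes from verifying that the remaining solution patterns of Lemma \ref{lemma5} never exceed the claimed multiplicity. For the one-parameter family (\ref{eq47}) with a pairing different from the ``$\alpha$ with $\alpha$, $\beta$ with $\beta$'' pairing used above, the solutions are controlled by the same $\mathbf{d}_{\mathcal{C}}$-based count. For the family (\ref{eq48}) and the sporadic quadruples in (\ref{eq49}), I would translate each quadruple into arithmetic conditions on $m$ and $2n$ (e.g.\ a quadruple mixing $\frac{\pi}{5},\frac{2\pi}{5},\frac{\pi}{3},\frac{\pi}{2}$ can be realized only when the corresponding denominators simultaneously divide $m$ and $2n$), and then cross-reference these conditions against the various ranges of $\mathbf{d}_{\mathcal{C}}$. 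This is exactly how the exceptional clauses in the corollary (``$15\mid m$ and $10\mid n$'', ``$12\mid m$ and $5\mid n$'', etc.) emerge: they are the small $\mathbf{d}_{\mathcal{C}}$ regimes in which a single sporadic quadruple, possibly combined with one or two solutions from (\ref{eq47}), pushes the multiplicity slightly above $\mathbf{d}_{\mathcal{C}}-1$.

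I expect the main obstacle to be the bookkeeping in the small-$\mathbf{d}_{\mathcal{C}}$ regimes $\mathbf{d}_{\mathcal{C}}\in\{1,2,3,4,5\}$. Here several sporadic quadruples in (\ref{eq49}) can be active simultaneously, and a given eigenvalue may receive contributions from both (\ref{eq47}) and (\ref{eq49}); ruling out double counting, and showing that no larger clustering is possible, requires a careful tabulation of which quadruples are compatible with which divisibility profiles of $(m,n)$. Once this case analysis is organized, the explicit piecewise formula in the corollary follows by taking, in each regime, the maximum of $\mathbf{d}_{\mathcal{C}}-1$ and the extra sporadic contributions.
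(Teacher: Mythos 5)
Your proposal follows exactly the route the paper takes: the paper's entire proof of this corollary is the single remark that it ``follows that of Corollary \ref{corollary3},'' i.e.\ reduce coinciding eigenvalues of $L_{\mathcal{C}}(m,n)$ to the trigonometric diophantine equation of Lemma \ref{lemma5} with $\frac{\alpha\pi}{m}$ replaced by $\frac{2\alpha\pi}{m}$, extract the $\mathbf{d}_{\mathcal{C}}-1$ lower bound from the eigenvalue $4$ via the family (\ref{eq47}), and settle the small-$\mathbf{d}_{\mathcal{C}}$ exceptional clauses by tabulating the sporadic quadruples (\ref{eq48})--(\ref{eq49}) against divisibility conditions on $m$ and $2n$. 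Your sketch is consistent with that plan (and in fact spells out more of it than the paper does), so it is essentially the same approach.
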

The proof of this corollary follows that of Corollary \ref{corollary3}. 

Based on Lemma \ref{lemma5}, Lemma \ref{lemma9} and Corollary \ref{corollary5}, we can establish the following theorem about the minimal number of control nodes such that $(L_{\mathcal{C}}(m,n),\bar{B})$ is controllable.

\begin{figure}
\begin{center}
\includegraphics[height=4.4cm]{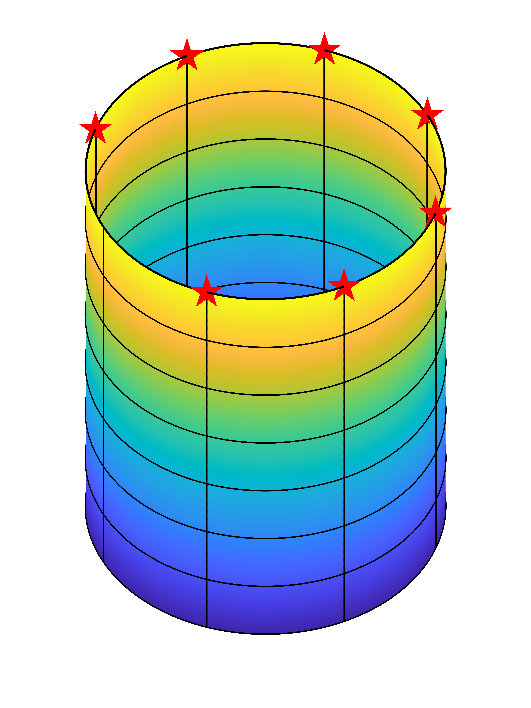} 
  \caption{This figure illustrates  how to choose control nodes on the cylinder grid graph $\mathcal{C}_{m,n}$ such that $(L_{\mathcal{C}}(m,n),\bar{B})$ is controllable, where the control nodes are marked by red stars. }
  \label{fig5}                            
\end{center}                                 
\end{figure}

\begin{theorem}\label{theorem3}
Denote $\varphi_{\mathcal{C}} (m,n)$ as the minimal number of control nodes such that $(L_{\mathcal{C}}(m,n),\bar{B})$ is controllable, that is,
        \begin{equation}\nonumber
        \begin{aligned}
            &\varphi_{\mathcal{C}} (m,n)= \min \big\{\rho\in \mathbf{Z}^{+}| there\  exist\  i_{1},...,i_{\rho}, 1\leq \\ 
            &i_{1},...,i_{\rho}
            \leq mn, such \ that\ (L_{\mathcal{C}}(m,n),[e^{(mn)}_{i_{1}},\cdots,e^{(mn)}_{i_{\rho}}])\  \\&is\ controllable \big \}.
        \end{aligned}
    \end{equation}
    Then we have $\varphi_{\mathcal{C}} (m,n) = \psi_{\mathcal{C}}(m,n)$.
\end{theorem}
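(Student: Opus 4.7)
The plan is to mirror the two-step argument of Theorem \ref{theorem2} while adapting the eigenbasis to the cylinder grid. The lower bound $\varphi_{\mathcal{C}}(m,n)\ge\psi_{\mathcal{C}}(m,n)$ is immediate: Lemma \ref{lemma2} forces the number of control columns to be at least the largest geometric (hence algebraic, since $L_{\mathcal{C}}(m,n)$ is symmetric) multiplicity, and Corollary \ref{corollary5} identifies this number with $\psi_{\mathcal{C}}(m,n)$. The heart of the proof is therefore the matching upper bound, which I would establish by producing an explicit control-node set of size $\psi_{\mathcal{C}}(m,n)$ that makes $(L_{\mathcal{C}}(m,n),\bar{B})$ controllable.

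For the construction, I would take the first $\psi_{\mathcal{C}}(m,n)$ vertices along a single ``rib'' of the cylinder, consistent with Fig.~\ref{fig5}, so that
\begin{equation}\nonumber
\bar{B}=e_1^{(m)}\otimes\bigl[e_1^{(n)},\ldots,e_{\psi_{\mathcal{C}}(m,n)}^{(n)}\bigr],
\end{equation}
which is well-defined since Corollary \ref{corollary5} implies $\psi_{\mathcal{C}}(m,n)\le n$ in every branch. By PBH, I need $u^{T}\bar{B}\neq 0$ for every nonzero eigenvector $u$. Writing $u$ in the $\bar{v}_\alpha\otimes w_\beta$ basis of Lemma \ref{lemma9} reduces the problem to showing that, for each multiple eigenvalue of multiplicity $r$, the $r\times\psi_{\mathcal{C}}(m,n)$ matrix with rows $\bar v_{\alpha_i,1}\bigl[w_{\beta_i,1},\ldots,w_{\beta_i,\psi_{\mathcal{C}}(m,n)}\bigr]$ has rank $r$, where $(\alpha_i,\beta_i)_{i=1}^{r}$ enumerates the pairs yielding the common eigenvalue $\hat\lambda_{\alpha_i,\beta_i}$. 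Simple eigenvalues are handled automatically because $w_{\beta,1}\neq 0$ and $\bar v_{\alpha,1}\neq 0$.

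I would then organise the rank verification along the case split of Corollary \ref{corollary5}. For the ``generic'' family of coincident eigenvalues coming from the pairing $2\alpha/m+\beta/n\equiv 1$ (the cylinder analogue of (\ref{eq27})), the associated $\beta_i$ are multiples of $n/\ell$ for a suitable divisor $\ell$ of $\mathbf{d}_{\mathcal{C}}$, and the corresponding $w$-block is, up to normalisation, a submatrix of the orthogonal eigenvector matrix of $L_{\mathcal{P}}(\ell)$; Lemma \ref{lemma4}, together with the non-vanishing of its last column, then yields full row rank, exactly as in Theorem \ref{theorem2}. For the finitely many sporadic eigenvalue coincidences arising from (\ref{eq49}) and for the small-$\mathbf{d}_{\mathcal{C}}$ branches, the multiplicity $r$ is bounded by a small constant (at most $6$), and I would verify rank by a direct determinant evaluation of the relevant square submatrix built from (\ref{eq51}).

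The main obstacle is the extra two-fold symmetry coming from the cycle. Since $\lambda^{(\mathcal{D})}_\alpha=\lambda^{(\mathcal{D})}_{m-\alpha}$, a single characteristic part $\cos(2\alpha\pi/m)+\cos(\beta\pi/n)$ typically corresponds to two independent cycle eigenvectors $\bar v_\alpha$ and $\bar v_{m-\alpha}$, so my auxiliary matrix may contain two rows that share the same $w_{\beta,\cdot}$ profile and differ only in the scalar $\bar v_{\alpha,1}=e^{\mathbf{i}2\alpha\pi/m}$. I expect to handle this by isolating a $2\times 2$ block with entries $e^{\pm\mathbf{i}2\alpha\pi/m}$, whose determinant equals $e^{\mathbf{i}2\alpha\pi/m}-e^{-\mathbf{i}2\alpha\pi/m}\neq 0$ whenever $2\alpha\neq m$; the degenerate case $2\alpha=m$ collapses the pair to a single eigenvector and is already covered. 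Interleaving this Vandermonde-type argument on the cycle side with the path-side rank computation based on Lemma \ref{lemma4} should deliver the required rank $r$ in every branch and complete the proof.
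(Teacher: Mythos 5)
There is a genuine gap, and it lies in the construction at the heart of your upper bound. You place the $\psi_{\mathcal{C}}(m,n)$ control nodes along a single rib of the cylinder, $\bar{B}=e^{(m)}_{1}\otimes[e^{(n)}_{1},\ldots,e^{(n)}_{\psi_{\mathcal{C}}(m,n)}]$, i.e.\ at a fixed cycle position. This choice cannot work. You correctly identify the obstacle — the cycle degeneracy $\lambda^{(\mathcal{D})}_{\alpha}=\lambda^{(\mathcal{D})}_{m-\alpha}$ produces, for $2\alpha\neq m$, two eigenvectors $\bar{v}_{\alpha}\otimes w_{\beta}$ and $\bar{v}_{m-\alpha}\otimes w_{\beta}$ with the \emph{same} $w_{\beta}$ profile — but your fix fails: with all control nodes on one rib, the two corresponding rows of your test matrix are $\mathbf{e}^{\mathbf{i}2\alpha\pi/m}[w_{\beta,1},\ldots,w_{\beta,\psi}]$ and $\mathbf{e}^{-\mathbf{i}2\alpha\pi/m}[w_{\beta,1},\ldots,w_{\beta,\psi}]$, which are proportional. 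Every $2\times 2$ minor drawn from them equals $\mathbf{e}^{\mathbf{i}2\alpha\pi/m}\mathbf{e}^{-\mathbf{i}2\alpha\pi/m}(w_{\beta,l_1}w_{\beta,l_2}-w_{\beta,l_2}w_{\beta,l_1})=0$; the determinant $\mathbf{e}^{\mathbf{i}2\alpha\pi/m}-\mathbf{e}^{-\mathbf{i}2\alpha\pi/m}$ you invoke only arises if the two control nodes sit at \emph{different cycle positions}. Concretely, the nonzero eigenvector $u=\mathbf{e}^{-\mathbf{i}2\alpha\pi/m}\,\bar{v}_{\alpha}\otimes w_{\beta}-\mathbf{e}^{\mathbf{i}2\alpha\pi/m}\,\bar{v}_{m-\alpha}\otimes w_{\beta}$ satisfies $u^{T}\bar{B}=0$ for your $\bar{B}$, so PBH fails for every $m\geq 3$ (already in the branch $\mathbf{d}_{\mathcal{C}}=1$ with $\psi_{\mathcal{C}}=2$). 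A second, independent failure: your claim that $\psi_{\mathcal{C}}(m,n)\leq n$ in every branch is false — e.g.\ $m=4$, $n=2$ gives $\mathbf{d}_{\mathcal{C}}=4$ and $\psi_{\mathcal{C}}=3>n$, so your $\bar{B}$ is not even well defined there.

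The paper's construction is the transpose of yours: it takes $\psi_{\mathcal{C}}(m,n)$ \emph{consecutive nodes around the cycle} at a fixed path position, $\bar{B}=[e^{(m)}_{1},\ldots,e^{(m)}_{\psi_{\mathcal{C}}(m,n)}]\otimes e^{(n)}_{1}$ (this always fits since $\psi_{\mathcal{C}}(m,n)<m$). Then the rows of the test matrix become $w_{\beta_i,1}[\bar{v}_{\alpha_i,1},\ldots,\bar{v}_{\alpha_i,\psi}]$ with $w_{\beta,1}\neq 0$, and the cycle coordinate supplies a genuine Vandermonde block in $\mathbf{e}^{\mathbf{i}2\alpha k\pi/m}$ that separates $\alpha$ from $m-\alpha$; the remaining case analysis then proceeds as in Theorem~\ref{theorem2}. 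Your lower bound via Lemma~\ref{lemma2} and Corollary~\ref{corollary5}, and your overall PBH-plus-rank strategy, are fine; it is the placement of the control set that must be corrected before the rank verification can go through.
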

\begin{proof}
The control nodes to guarantee controllability of $(L_{\mathcal{C}}(m,n),\bar{B})$ can be chosen as $\psi_{\mathcal{C}}(m,n)$ consecutive nodes in the first row of the cylinder grid graph as illustrated in Fig.~\ref{fig5}.  It is clear that the corresponding control matrix is $\bar{B}=[e^{(m)}_{1},\cdots,e^{(m)}_{\psi_{\mathcal{C}}(m,n)}]\otimes e^{(n)}_{1}$. By following the proof line of Theorem \ref{theorem2}, we can obtain the reuslts of the theorem.
\end{proof}

\section{Concluding remarks}
This paper investigates controllability of networked multiagent systems which are obtained by linearizing and  spatial discretizing Turing's model.  We first establish the equivalence between controllability of the second-order linearized Turing's model and  the Laplace dynamic system under mild conditions on parameters of the model. Then, with the help of solutions
of the trigonometric diophantine equation, we characterize  multiplicities of eigenvalues of the Laplacian matrix of grid graphs and cylinder grid graphs. Based on this analysis, we  give a complete characterization for the controllability problem of the linearized Turing's model on these two graphs by not only providing minimal numbers of control nodes  but also choosing corresponding control node sets to guarantee controllability of the system.

\bibliographystyle{apalike}        
\bibliography{autosam}           

\end{document}